\def\dar[#1]{\ar@<2pt>[#1]\ar@<-2pt>[#1]}
\newtheorem{Thm}{Theorem}[section]
\newtheorem{Prop}[Thm]{Proposition}
\newtheorem{Lem}[Thm]{Lemma}
\newtheorem{Cor}[Thm]{Corollary}
\theoremstyle{remark}
\newtheorem{Rem}[Thm]{Remark}
\theoremstyle{definition}
\newtheorem{Def}[Thm]{Definition}
\newtheorem{Exa}[Thm]{Example}
\DeclareMathOperator{\Der}{Der}
\DeclareMathOperator{\Ad}{Ad}
\DeclareMathOperator{\ad}{ad}
\DeclareMathOperator{\Diff}{Diff}
\DeclareMathOperator{\Aut}{Aut}
\DeclareMathOperator{\GL}{GL}
\DeclareMathOperator{\sll}{\mathfrak{sl}}
\DeclareMathOperator{\id}{id}
\DeclareMathOperator{\Hom}{Hom}
\DeclareMathOperator{\Lie}{Lie}
\DeclareMathOperator{\TAut}{TAut}
\DeclareMathOperator{\tder}{tder}
\DeclareMathOperator{\Fun}{Fun}
\DeclareMathOperator{\End}{End}
\let\O\undefined
\DeclareMathOperator{\O}{\mathcal{O}}
\newcommand{\eTAut}[1][n]{\Exp(\Omega^1(\L_#1))}
\newcommand{\esder}{\Omega^1_\text{cl}(\L_n)}
\DeclareMathOperator{\Exp}{\mathcal E \textit{xp}}
\DeclareMathOperator{\Map}{Map}
\DeclareMathOperator{\Ham}{\mathcal H am}
\DeclareMathOperator{\tr}{tr}
\newcommand{\g}{\mathfrak{g}}
\newcommand\kk{{\mathbf k}}
\newcommand{\vf}{\mathfrak{X}}
\renewcommand{\L}{\mathcal{L}}
\newcommand{\emptycomment}[1]{}
\author{Florian Naef}
\title{Poisson Brackets in Kontsevich's "Lie World"}
\date{}
\begin{document}

\maketitle

\begin{abstract}
\noindent{\bf Abstract.} In this note the notion of Poisson brackets in Kontsevich's "Lie World" is developed. These brackets can be thought of as "universally" defined classical Poisson structures, namely formal expressions only involving the structure maps of a quadratic Lie algebra. We prove a uniqueness statement about these Poisson brackets with a given moment map. As an application we get formulae for the linearization of the quasi-Poisson structure of the moduli space of flat connections on a punctured sphere, and thereby identify their symplectic leaves with the reduction of coadjoint orbits. Equivalently, we get linearizations for the Goldman double Poisson bracket, our definition of Poisson brackets coincides with that of Van Den Bergh \cite{vdb} in this case.  This can furthermore be interpreted as giving a monoidal equivalence between Hamiltonian quasi-Poisson spaces and Hamiltonian spaces.
\end{abstract}

\emptycomment{
\section{Notation}
Let throughout $g \cong g^*$ denote a quadratic Lie algebra, with non-degenerate pairing denoted by $\left<\cdot , \cdot \right>$.
Everything is formal, except where it isn't.
}

\section{Introduction}
The motivation of this note was originally to give another proof of the result Theorem 6.6 in \cite{Jef}. Theorem 6.6 states that the moduli space of flat $\g$-connections on a surface of genus 0 with prescibed monodromy around the punctures is symplectomorphic to the symplectic reduction of the product of coadjoint orbits, at least if the prescribed monodromies are sufficiently close to the identity. The procedure is as follows. We identify the relevant moduli space with a reduction of a Hamiltonian quasi-Poisson space whose underlying manifold is the product of a number of $G$'s. Using the exponential map we can furthermore pull the situation back to a product of $\g$'s. Summarizing, we get a quasi-Poisson structure on $\g \times \dots \times \g$ together with a moment map, which we with to compare to the standard Kostant-Kirillov-Souriau structure. Moreover, all those structures are defined by "formulae" only involving the lie bracket and the inner product of $\g$. A precise definition of this is given below. It turns out that for such universally defined Hamiltonian (quasi-)Poisson structures the moment map uniquely defines the bivector field and vice versa.

\section{Lie spaces}
We recall some definitions from \cite{K}. Let $\mathbf{Lie}$ denote the category of free complete graded (super-)Lie algebras, where morphisms are continuous Lie algebra morphisms. We define the category $\mathbf{LieSp}$ of (formal affine) Lie spaces as the opposite category of $\Lie$. This definition is very much in analogy with the equivalence of (commutative) affine schemes and $\mathbf{Ring}^\text{op}$. Much of the language that follows is motivated by this analogy. By definition, there is a canonical contravariant functor
\begin{align*}
    \mathcal{O}: \mathbf{LieSp} = \mathbf{Lie}^\text{op} &\longrightarrow \mathbf{Lie}.
\end{align*}
More concretely, a Lie space $\L$ is nothing but a (graded) Lie algebra, which we choose to call the coordinate Lie algebra of the space and denote it by $\mathcal{O}(\L)$. Morphisms between Lie spaces are maps of Lie algebras in the opposite direction. A choice of free homogenous generators of $\mathcal{O}(\L)$ shall be called a coordinate system, or just coordinates. Let us denote by
\begin{align*}
    L(z_1,\cdots, z_n) \in \mathbf{Lie},
\end{align*}
the completed free graded (super-)Lie algebra in generators $z_1, \cdots, z_n$, where each generator has possibly non-zero degree, and the completion is taken with respect to the lower central series.
Let furthermore
\begin{align*}
    \L(z_1,\cdots, z_n) \in \mathbf{LieSp},
\end{align*}
denote the Lie space whose coordinate Lie algebra is $L(z_1,\cdots, z_n)$. Thus $L(z_1,\cdots, z_n)$ and $\L(z_1,\cdots, z_n)$ are the same objects, the only difference is in the direction we choose to write morphisms, and of course in our interpretation. Using the language introduced above, the $z_1, \cdots, z_n$ are coordinates on the space $\L(z_1,\cdots, z_n)$. And elements of $\mathcal{O}(\L(z_1,\cdots, z_n))$ are Lie series in the the coordinates $z_1, \cdots, z_n$.
Let now
\begin{align*}
\L_n &:= \L (x_1, \cdots , x_n ) \\
L_n &:= L (x_1, \cdots , x_n )
\end{align*} denote the above with all generators $x_i$ of degree 0. 

In this context, $\L_n$ is nothing but the product of $n$ copies of the affine line $\L_1$, since products in $\mathbf{LieSp}$ are coproducts in $\mathbf{Lie}$ that is completed free products.
In what follows, we wish to do differential geometry on these Lie spaces. Guiding our intuition is the fact, that each element of $L_n$ induces a formal $\g$-valued function on $\g^{\times n}$. Abstractly this follows from the fact that $\g^{\times n} = \Hom(L_n, \g)$, but more concretely is it seen by just interpreting elements in $L_n$ as formulae. Take for instance $[x_1, x_2]$, it can be seen as a function taking as inputs two elements $x_1, x_2 \in \g$ and giving as output another element of $\g$.
In this sense, the space $\L_n$ can be thought of as a "universal version" of $\g^{\times n}$.
If we want to produce a $\kk$-valued function, one possibility is to take the product of two $\g$-valued functions with respect to some inner product on $\g$. Let us from now on assume that $\g$ is a quadratic Lie algebra, i.e. there is a chosen non-degenerate invariant inner product.
The definition of functions on a Lie space is then chosen such that it induces $\kk$-valued functions on $\g^{\times n}$, that is
\[
F(\L) := \O(\L) \otimes \O(\L)  / \left\{ a \otimes b - \pm b \otimes a, \ [a,b] \otimes c - a \otimes [b,c] \right\},
\]
or in other words the object in vector spaces representing the functor of symmetric invariant inner products on $L$. We will denote the universal inner product by 
\begin{align*}
    \O(\L) \otimes \O(\L) &\longrightarrow F(\L) \\
    a \otimes b &\longmapsto \left< a , b \right>.
\end{align*}

\begin{Rem}
Note that there is a difference between the space of functions and the coordinate algebra. Whereas the latter carries the structure of a Lie algebra, the former is merely a vector space, that is functions cannot be multiplied. To get an algebra, one might choose instead to work with the symmetric algebra over $F(\L)$, however, we choose not to do so.
\end{Rem}

\begin{Rem}
In terms of graphical calculus, elements of the coordinate Lie algebra can be seen as rooted Jacobi tree, whereas functions are simply Jacobi trees, where the leaves are labeled by generators of the Lie algebra. This picture will in particular explain later, why we cannot contract arbitrary forms with polyvectorfields, since this would generate loops, and thus leave the world we choose to work in.
\end{Rem}

\begin{Rem}[Ass]
As in \cite{K} everything works analogously if one replaces Lie algebras by associative algebra. Instead of developing the theory in parallel, the differences are pointed out in remarks.
In the associative world $F$ also goes under the name of $HH_0(A) = A/[A,A]$, that is the zero-th Hochschild homology. Moreover, by embedding a free Lie algebra into its universal envelopping algebra, which is a free associative algebra, all "Lie" functions embed into "Ass" functions. The last part can be seen from the Cartan-Eilenberg isomorphism $HH(U(\g)) = H_\text{Lie}(\g, (U\g)^\text{ad})$, which applied to our case says $HH_0(U(L_n)) = (U(L_n))_{L_n} \cong S(L_n)_{L_n}$, namely that "Ass" functions are the $L_n$-coinvariants of the symmetric algebra over $L_n$. In particular, we see that the quadratic part coincides with the definition of "Lie" functions.
Graphically, we are replacing Jacobi trees with ribbon trees.
\end{Rem}

In order to get forms and polyvector fields, we introduce the odd tangent and cotangent bundle, respectively,
\begin{alignat*}{2}
T[1]\left(\L(z_1,\cdots, z_n)\right) &:=  \L(z_1,\cdots, z_n, dz_1, \cdots, dz_n), \quad & |dz_i| &= |z_i|+1,  \\
T^*[1]\left(\L(z_1,\cdots, z_n)\right) &:=  \L(z_1,\cdots, z_n, \partial_1, \cdots, \partial_n), & |\partial_i| &= -|z_i|+1,
\end{alignat*}
and 
\begin{alignat*}{2}
T[1]\L_n &:=  \L(x_1,\cdots, x_n, dx_1, \cdots, dx_n), \quad & |dx_i| &= 1,  \\
T^*[1]\L_n &:=  \L(x_1,\cdots, x_n, \partial_1, \cdots, \partial_n), & |\partial_i| &= 1,
\end{alignat*}
in the non graded case.
Their functions are then denoted by
\begin{align*}
\Omega(\L)  &:= F(T[1]\L) , \\
\vf(\L) &:= F(T^*[1]\L).
\end{align*}
Both are graded vector spaces and by the usual formulae $\Omega(\L_n)$ can be endowed with a differential of degree 1. After some preparation, the usual formulae can be used to define a Lie bracket on $\vf(\L_n)$ with a Lie bracket analogous to the Schouten bracket. The Schouten bracket can be interpreted as induced by the canonical odd symplectic stucture on $T^*[1]\L_n$. It will be shown that the bracket also defines an action of polyvectorfields on the coordinate Lie algebra of $T^*[1]\L_n$. These structures are compatible with specialization, that is for any quadratic Lie algebra $\g$ we get canonical maps
\begin{eqnarray*}
\Omega(\L_n)  &\rightarrow& \Omega(\g^{\times n}) , \\
\vf(\L_n) &\rightarrow& \vf(\g^{\times n}),
\end{eqnarray*}
of complexes and Lie algebras, respectively. More concretely, let $e_\alpha$ be a basis of $\g$. Let $t_{\alpha \beta} = \left<e_\alpha, e_\beta\right>$ be the coefficients of the inner product and $t^{\alpha \beta}$ its inverse. Let $x^\alpha$ denote the dual basis of $e_\alpha$ and hence a coordinate system on $\g$. The above maps are then induced by
\begin{align*}
    \O(T[1]\L_n) &\longrightarrow \Omega(\g^{\times n}) \otimes \g \\
    x_i & \longmapsto x_i^\alpha \otimes e_\alpha \\
    dx_i &\longmapsto dx_i^\alpha \otimes e_\alpha
\end{align*}
and
\begin{align*}
    \O(T^*[1]\L_n) &\longrightarrow \vf(\g^{\times n}) \otimes \g \\
    x_i & \longmapsto x_i^\alpha \otimes e_\alpha \\
    \partial_i &\longmapsto t^{\alpha \beta} \tfrac{\partial}{\partial x_i^\alpha} \otimes e_\beta.
\end{align*}
To descend to functions, the inner product on $\g$ is applied on the $\g$ factor. The invertibility of the inner product on $\g$ is only used in the second map. A form, polyvectorfield or $\g$-valued function on $g^{\times n}$ induced by an object on $\L_n$ will be called {\it universal}. For example, the KKS Poisson bivector on $\g$, $\left< x, [\partial_x, \partial_x] \right>$, is a universal bivector field. Let us explicitly compute the image of this bivector field under the above map as follows,
\begin{align*}
\left< x^\alpha \otimes e_\alpha, \left[t^{\beta \gamma} \tfrac{\partial}{\partial x_i^\beta} \otimes e_\gamma, t^{\delta \epsilon} \tfrac{\partial}{\partial x_i^\delta} \otimes e_\epsilon\right] \right> &= t_{\alpha \eta} c_{\gamma \epsilon}^\eta t^{\beta \gamma} t^{\delta \epsilon} x^\alpha \tfrac{\partial}{\partial x^\beta} \tfrac{\partial}{\partial x^\delta} \\
&= c_{\alpha}^{\beta \delta} x^\alpha \tfrac{\partial}{\partial x^\beta} \tfrac{\partial}{\partial x^\delta},
\end{align*}
where $c_{\gamma \epsilon}^\eta$ are the structure constants of $\g$ and in the last step we raised and lowered indices using the inner product.
The adjoint action on $\g$, seen as a $\g$-valued vector field using the inner product, is universal, as it is induced by $[x,\partial_x]$.
Moreover, these objects get represented faithfully that way, as shown by
\begin{Lem}
\label{lem:faith}
Let $f \in \Omega(\L_n), \vf(\L_n)$ or $\L_n$. If $f \neq 0$ then
$f$ induces a non-zero object on $\sll(N)$ with its Killing form for $N$ sufficiently large.
\end{Lem}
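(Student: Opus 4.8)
The plan is to reduce the statement to a faithfulness property of certain trace functions on free Lie algebras evaluated on matrix algebras. The key observation is that all three spaces $\L_n$, $\Omega(\L_n)$ and $\vf(\L_n)$ are, after forgetting the geometric interpretation, built out of free (graded) Lie algebras and their $F$-quotients, so it suffices to prove the following two facts: first, that a nonzero element of a free complete Lie algebra $L(z_1,\dots,z_m)$ remains nonzero under the universal map to $\sll(N)$ for $N$ large; and second, that a nonzero class in $F(\L(z_1,\dots,z_m)) = L(z_1,\dots,z_m)\otimes L(z_1,\dots,z_m)/(\dots)$ pairs nontrivially via the Killing form after specialization. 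The cases $\Omega(\L_n)$ and $\vf(\L_n)$ then follow by applying the second fact to the free Lie algebras on the doubled generator sets $\{x_i, dx_i\}$ and $\{x_i,\partial_i\}$ respectively, keeping track of the degree (parity) bookkeeping, which is harmless since we may work degree by degree.

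First I would handle the Lie algebra case. Fix a nonzero $f \in L_n$; since the completion is by the lower central series, $f$ has a nonzero image $\bar f$ in some graded piece $L_n^{(k)}/L_n^{(k+1)}$, a finite-dimensional space, and it suffices to separate $\bar f$. Here I would invoke the classical fact (going back to Magnus/Witt, or via the Friedrichs criterion and the embedding of the free Lie algebra into the free associative algebra) that free Lie algebras are residually finite-dimensional and, more precisely, that the natural map $L_n^{(k)}/L_n^{(k+1)} \to \mathfrak{gl}(N)$ obtained by substituting generic matrices is injective for $N \gg k$. One clean way: the free Lie algebra $L_n$ embeds in the free associative algebra, and generic $N\times N$ matrices generate a free associative algebra up to degree $N$ (a theorem on generic matrices), so substituting generic traceless matrices detects $\bar f$; passing from $\mathfrak{gl}(N)$ to $\mathfrak{sll}(N)$ only drops the trace part, which does not affect a bracket expression of degree $\geq 1$. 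This already gives the statement for $f \in \L_n$.

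Next, for $f \in F(\L_n)$ (and likewise $\Omega, \vf$ after the substitution above), write $f = \sum_i \langle a_i, b_i\rangle$ with $a_i, b_i \in L_n$; its specialization to $\g = \sll(N)$ with the Killing form is, up to a nonzero scalar, $\sum_i \tr(A_i B_i)$ where $A_i, B_i$ are the matrix substitutions of $a_i, b_i$. I want: if this trace function vanishes identically in the matrix entries for all large $N$, then $f = 0$ in $F(\L_n)$. The cleanest route is to use the known description of $F(L_n)$ (equivalently the cyclic words / necklace description in the associative world, via the Remark on $HH_0$) together with the classical fact that, for generic matrices, distinct cyclic words give linearly independent trace functions once $N$ is large enough (Procesi, Razmyslov — the first fundamental theorem of invariant theory for matrices, in the range below the relations). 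Since the Lie $F$ embeds into the associative $HH_0$ (as noted in the Remark), nonvanishing of the associative trace invariant forces $f\neq 0$ already at the Lie level. I expect the main obstacle to be precisely this last point: making rigorous the passage from "Lie functions" to "cyclic words" and citing the exact range of $N$ in which the trace monomials are independent — i.e. invoking the first fundamental theorem of invariants of matrices with the right quantitative control and checking that restricting from $\mathfrak{gl}(N)$ to $\mathfrak{sll}(N)$, and from arbitrary matrices to generic ones, preserves independence. The degree/parity bookkeeping for the odd generators $dx_i, \partial_i$ is routine by working in each multidegree separately and choosing $N$ uniformly large for the finitely many components involved in a given $f$.
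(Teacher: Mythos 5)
Your proposal is correct and follows essentially the same route as the paper: reduce to the multilinear case, treat forms and polyvector fields as functions in doubled variables, embed into the associative world, and conclude from the linear independence of (cyclically symmetrized) trace monomials in generic matrices for $N$ large. The only difference is cosmetic --- you cite the Procesi--Razmyslov first fundamental theorem where the paper asserts the independence ``by direct computation.''
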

\begin{proof}
Using polarization, one can reduce to the case where $f$ is linear in each coordinate. Any form or vector field that is multi-linear in the odd variables, can be seen as an ordinary multi-linear function on twice as many variables, by identifying $T\g^{\times n} \cong \g^{\times 2n}$ and $T^*\g^{\times n} \cong \g^{\times 2n}$. By embedding into the associative world, the problem is reduced to showing that the set of functions on $\sll(N)^n$ of the form
\[
\tr ( \ad_{x_{\sigma(1)}} \cdots \ad_{x_{\sigma(n)}} ) + (-1)^n \tr ( \ad_{x_{\sigma(n)}} \cdot \ad_{x_{\sigma(1)}})
\]
for$\sigma \in S_n$, are linearly independent. This can be seen by direct computation.
\end{proof}

\begin{Rem}
One can also use the double of the truncated free Lie algebra (with zero cobracket) to show faithfulness.
\end{Rem}
It is clear that on any given quadratic Lie algebra we only get a comparatively small amount of functions, forms and vector fields, in particular all the objects are $\g$-invariant. The way one can use lemma \ref{lem:faith} is that whenever we have a construction or operation on a Lie space that induces a corresponding construction or operation on a concrete Lie algebra, one can use lemma \ref{lem:faith} to show that identities that hold on all lie algebras also hold on the Lie space. The theory of lie spaces can thus be thought of studying structures on $\g^{\times n}$ that are of a particularly natural type, that is in the image of the above specialization maps.

The usual yoga using contraction with the Euler vectorfield shows that $\Omega(\L_n)$ is acyclic.
Moreover, there is a simple description $\Omega^1(\L_n)$ and $\Omega^2(\L_n)$.
\begin{Lem}
\label{lem:comp}
The following maps are isomorphisms of vector spaces.
\begin{eqnarray*}
{L_n}^{\times n} & \longrightarrow & \Omega^1(\L_n) \\
(\alpha_i) & \longmapsto & \Sigma \left< dx_i, \alpha_i \right> \\
\\
\mathfrak{u} \left( n, \mathcal U (L_n) \right) = \{ (a_{ij}) \in U(L_n) \, , \, a_{ij} + *a_{ji} = 0 \}& \longrightarrow & \Omega^2(\L_n) \\
(a_{ij}) & \longmapsto & \Sigma \left< dx_i, \Ad_{a_{ij}} dx_j \right>,
\end{eqnarray*}
where $*$ is the cannonical antipode on the universal enveloping algebra of $L_n$.
\end{Lem}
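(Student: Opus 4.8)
The plan is to exhibit explicit inverse maps and to identify the kernels, reducing everything to the combinatorics of the free Lie algebra and its enveloping algebra. Recall that by definition $\Omega^1(\L_n)$ (resp. $\Omega^2(\L_n)$) is the degree-$1$ (resp. degree-$2$) part of $F(T[1]\L_n) = \O(T[1]\L_n)\otimes\O(T[1]\L_n)/\{\text{cyclicity and infinitesimal invariance}\}$, where the grading counts the number of $dx_i$'s. So I would first describe $\O(T[1]\L_n)$ in low $d$-degree: in $d$-degree $1$ it is the free module over $L_n$ on the symbols $dx_i$ under the adjoint action, i.e. spanned by iterated brackets with exactly one $dx_i$; in $d$-degree $2$ it is spanned by elements with exactly two $dx$'s. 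Then I would use the cyclic/invariance relations in the definition of $F$ to put every element of $\Omega^1$ into the normal form $\sum_i \langle dx_i, \alpha_i\rangle$ with $\alpha_i\in L_n$, and every element of $\Omega^2$ into the form $\sum_{i,j}\langle dx_i, \Ad_{a_{ij}} dx_j\rangle$ with $a_{ij}\in U(L_n)$; here one uses that, via the invariance relation $[a,b]\otimes c - a\otimes[b,c]$, a $\g$-valued expression $\ad_{u_1}\cdots\ad_{u_k}(dx_j)$ paired against $dx_i$ becomes $\langle dx_i, \Ad_{u_1\cdots u_k} dx_j\rangle$ where $u_1\cdots u_k\in U(L_n)$ acts through the adjoint representation extended to the enveloping algebra.

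Next I would pin down the relations. For $\Omega^1$: the only identifications among the $\langle dx_i,\alpha_i\rangle$ come from cyclicity $\langle dx_i,\alpha_i\rangle = \pm\langle \alpha_i, dx_i\rangle$, which is already accounted for, and from the invariance relation, which only lets one move brackets from one side of $\langle\cdot,\cdot\rangle$ to the other without changing $\alpha_i\in L_n$; hence the map ${L_n}^{\times n}\to\Omega^1(\L_n)$ is an isomorphism. For $\Omega^2$: cyclicity gives $\langle dx_i,\Ad_{a_{ij}}dx_j\rangle = \langle dx_j, \Ad_{*a_{ij}} dx_i\rangle$ up to the Koszul sign coming from the odd degrees of $dx_i,dx_j$ — so in particular the "diagonal" terms $i=j$ and the symmetrization across $i\leftrightarrow j$ force precisely the relation $a_{ij} + {*}a_{ji} = 0$, which is the defining relation of $\mathfrak u(n,\mathcal U(L_n))$. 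The invariance relation again only reshuffles which part of $a_{ij}$ sits on which side, and this is exactly the statement that the adjoint action of $U(L_n)$ is well-defined on the quotient; I should check that it introduces no further collapse. The appearance of the antipode $*$ is the standard fact that $\langle \Ad_a u, v\rangle = \langle u, \Ad_{*a} v\rangle$ for the (anti)invariant pairing, extended from $\g$ to $U(\g)$.

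To make the "no further relations" claims rigorous I would invoke Lemma \ref{lem:faith}: it suffices to check injectivity after specialization to $\sll(N)$ for all large $N$, where $\Omega^1$ and $\Omega^2$ become honest invariant differential forms and the images of the normal-form elements are manifestly distinct by the linear independence computation already used in the proof of Lemma \ref{lem:faith} (traces of products of $\ad$'s). Concretely, surjectivity of both maps is the normal-form argument above; injectivity of the first is immediate once surjectivity is known because a nonzero $(\alpha_i)$ gives a nonzero form on $\sll(N)$; injectivity of the second is the statement that $\sum_{i,j}\langle dx_i,\Ad_{a_{ij}}dx_j\rangle = 0$ with $a_{ij}+{*}a_{ji}=0$ forces all $a_{ij}=0$, which again one reads off on $\sll(N)$ using faithfulness of the adjoint representation of the free Lie algebra on enough generators.

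The main obstacle I anticipate is bookkeeping the Koszul signs and the antipode correctly: the generators $dx_i$ are odd, so $\langle dx_i, dx_j\rangle = -\langle dx_j, dx_i\rangle$ already at the linear level, and propagating this through iterated adjoint actions is exactly what produces the antipode (with its sign $*(\xi_1\cdots\xi_k) = (-1)^k\xi_k\cdots\xi_1$ on $U(L_n)$) and the condition $a_{ij}+{*}a_{ji}=0$ rather than $a_{ij}={*}a_{ji}$. A secondary point to handle carefully is that $\mathcal U(L_n)$ is a \emph{completed} enveloping algebra, so one must check the normal form converges with respect to the lower-central-series filtration; this is routine since everything respects the grading by total bracket-length. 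Once the signs and completions are under control, the rest is the normal-form reduction plus one appeal to Lemma \ref{lem:faith}.
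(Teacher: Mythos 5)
Your proposal is correct in substance, but it takes a genuinely different route from the paper on the crucial injectivity half. You and the paper agree on surjectivity: both reduce an arbitrary element of $F(T[1]\L_n)$ to the stated normal form by pushing adjoint actions across the pairing via the invariance relation, with the cyclic relation producing the condition $a_{ij}+{*}a_{ji}=0$. For injectivity, however, the paper does not specialize: it constructs an explicit left inverse, namely the contraction operators $\iota_{\partial_i}$, defined as derivations of $\O(T[1]\L_n)$ with values in its enveloping algebra and then checked (via the auxiliary derivation $\phi$ and the adjunction $\phi(\omega)=\left<t,\iota_{\partial_i}\omega\right>$) to descend to the quotient $\Omega(\L_n)$; applying $\iota_{\partial_i}$ to the normal forms recovers $(\alpha_i)$, respectively reduces the two-form case to the uniqueness of writing a degree-one element of $L(x,dx)$ as $\sum\ad_{\pi_i}dx_i$. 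You instead compose with the specialization to $\sll(N)$ and prove that the images of the normal forms are linearly independent there, which amounts to the linear independence of the $\End(\sll(N))$-valued functions $x\mapsto\ad_{x_{i_1}}\cdots\ad_{x_{i_k}}$ for distinct words (note this is what you actually need, not the statement of Lemma \ref{lem:faith} itself, whose proof in any case presupposes a normal form of this kind; your parenthetical acknowledgment of this is the right instinct, and the independence claim is true and no harder than the trace computation the paper waves at there). The trade-off: your argument is shorter and reduces everything to one concrete finite-dimensional computation, but it leans on the specialization machinery and characteristic-zero representation theory; the paper's construction is more work up front but is purely algebraic and, more importantly, the operators $\iota_{\partial_i}$ it produces are not disposable --- they are reused immediately to define $\tfrac{\partial}{\partial x_i}$, the Schouten bracket, and $\iota_\rho$, so the paper's proof is also building infrastructure. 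Your sign bookkeeping (odd $dx_i$, antipode with $(-1)^k$ reversal, the resulting anti-self-adjointness condition) and your remark on degreewise convergence in the completed setting are both correct.
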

In words, the lemma says that the space of 1-forms is given by an $n$-tuple of Lie series, whereas the space of 2-forms is given by a skew-symmetric matrix of associative series, where the antipode is used for the skew-symmetry.

Using the lemma we define the maps $\tfrac{\partial}{\partial x_i} : F(\L_n) \to L_n$ as the composition of $d$, the inverse of the above map, and projection onto the $i$th component, or equivalently such that 
\begin{align*}
    d\alpha = \Sigma \left< dx_i , \tfrac{\partial \alpha}{\partial x_i} \right> \quad \text{for }\alpha \in F(\L_n).
\end{align*}

\begin{Rem}[Ass]
The same is true in the "associative" world, where the role of $\mathcal U (L_n)$ is now played by $A \otimes A$, where $A$ is the underlying free associative algebra, because these objects encode functions that are linear in two additional "separator" variables.
\end{Rem}

\begin{proof}
Surjectivity follows easily from the defining relations in both cases. For injectivity one defines the operation of contracting with the coordinate vectorfields $\partial_i$ as follows. Let $\iota_{\partial_i}$ denote the derivation of degree -1 on $\O(T[1]\L_n)$ with values in its universal enveloping algebra with module structure given my left multiplying by specifying
$$
\iota_{\partial_i} \left( dx_j \right ) = \delta_{ij}, \, \, \, \, \iota_{\partial_i} \left( x_j \right ) = 0.
$$
One checks that the following map is well-defined
\begin{align*}
    \Omega(\L_n) & \overset{\iota_{\partial_i}}{\longrightarrow} \O(T[1]\L_n) = L(x_1, \cdots, x_n, dx_1, \cdots, dx_n) \\
    \left< \alpha, \beta \right> &\longmapsto \ad^*_{\iota_{\partial_i} \alpha} \beta + (-1)^{|\alpha| |\beta|} \ad^*_{\iota_{\partial_i} \beta} \alpha.
\end{align*}
This operation has the following defining property. Let $\phi$ be the derivation of degree -1 on $\O(T[1]\L_n)$ with values in $L\left(x_1, \dotsc , x_n , dx_1, \dotsc, dx_n, t \right)$, where $t$ has degree 0, defined by
\begin{align*}
    \phi \left( dx_j \right ) = \delta_{ij} t, \, \, \, \, \phi \left( x_j \right ) = 0,
\end{align*}
which straightforwardly extends to
\begin{align*}
    \Omega(\L_n) & \overset{\phi}{\longrightarrow} F(\L(x_1, \cdots, x_n, dx_1, \cdots, dx_n, t).
\end{align*}
The maps $\iota_{\partial_i}$ and $\phi$ are then adjoint to each other in the following sense, that for any $\omega \in \Omega(\L_n)$ we have
\begin{align*}
\phi(\omega) = \left< t , \iota_{\partial_i} \left(\omega\right) \right>.
\end{align*}

Now the lemma follows by applying the operator $\iota_{\partial_i}$ to the elements of the specified form. More precisely, for forms the maps $\iota_{\partial_i}$ for $i = 1,\dots,n$ are an inverse to the map in the lemma. And in the case of two forms it follows from the observation, that any degree $1$ element in $L(x_1, \cdots, x_n, dx_1, \cdots, dx_n)$ is of the form $\sum \ad_{\pi_i} dx_i$ for uniquely determined associative series $\pi_i$.
\end{proof}

Using the above lemma we can construct the Schouten bracket. We define the following map of degree -1
\begin{align*}
\vf^\bullet  \overset{[\cdot, \cdot]}{\longrightarrow}& \Der( T^*[1]L_n) \\
\alpha \longmapsto &[\alpha,x_i] =(-1)^{|\partial_i|(|\alpha| - |\partial_i|)} \tfrac{\partial\alpha}{\partial(\partial_i)} \\
& [\alpha, \partial_i] = -(-1)^{|x_i|(|\alpha| - |x_i|)} \tfrac{\partial\alpha}{\partial x_i}.
\end{align*}
\begin{Rem}
Note that the gory signs would disappear if we choose right partial derivatives instead of left partial derivatives. The formula can then be seen as induced by
\[
[ \cdot, \cdot] = \overleftarrow{ \tfrac{\partial}{\partial (\partial_i)}} \overrightarrow{ \tfrac{\partial}{\partial x_i}} - (-1)^{|x_i| |\partial_i|} \overleftarrow{ \tfrac{\partial}{\partial x_i}} \overrightarrow{ \tfrac{\partial}{\partial (\partial_i)} }
\]
\end{Rem}

It is straightforward to check that this operation descends to a bracket on $\vf^\bullet$ and the above map intertwines this bracket with the commutator of derivations. In particular, the bracket on $\vf^\bullet$ satisfies the Jacobi identity.

Moreover, the degree one part identifies vector fields with derivations, that is
\begin{align*}
\vf^1 &\longrightarrow \Der(L_n) \\
\sum \left<\alpha_i, \partial_i \right> &\longmapsto  (x_i \mapsto \alpha_i).
\end{align*}
As the construction of $\Omega^\bullet$ is covariant, vectorfields also act on forms, we denote this action by the usual symbol for Lie derivatives, that is by $L_X$ for $X \in \vf^1$
\subsection{Poisson bivector fields}
\begin{Def}
A bivector field $\Pi \in \vf^2$ is called a {\em Poisson structure} if $[\Pi, \Pi] = 0$. It is called {\em non-degenerate} if the matrix representing it by lemma \ref{lem:comp} is not a zero-divisor.
\end{Def}
Any bivector field $\Pi$ induces a bracket by the formula $\{ \alpha, \beta \}_\Pi = [\alpha, [\Pi, \beta]]$, where $\alpha, \beta$ are either both functions, or one of the is a function and the other one an element of the coordinate algebra. If $\Pi$ is Poisson, then this defines a Lie bracket on the space of functions and an action of this Lie algebra of functions onto the underlying Lie algebra. However, the bivector field $\Pi$ carries more information than these operations, and more precisely, there exists $\Pi_1 \neq \Pi_2$ distinct bivector fields that induce the same operations, but only $\Pi_1$ is Poisson, whereas $\Pi_2$ is not (see examples of quasi-Poisson structures later).

\begin{Exa}
Any bivector field of the form 
\begin{align*}
    \left< \alpha_i, \sum_j [x_j, \partial_j] \right>
\end{align*}
for arbitrary $\alpha_i$ defines trivial operations.
\end{Exa}

\begin{Rem}
It is possible to view a bivector field as a kind of biderivation on the Lie algebra with values in its universal envelopping algebra. To make this precies, one can add two "placeholder" symbols $s$,$t$ as generators of the Lie algebra. The formula $(a,b) \mapsto \{ \left< a, s \right> , \left< b, t \right> \}_\Pi$ defines a bracket with values in functions linear in $s$ and $t$, which we can identify with the universal envelopping algebra similar as in lemma \ref{lem:comp} and in particular in its proof. Bivector fields are in one-to-one correspondence with such mappings that are biderivations in a suitable sense.
\end{Rem}

\begin{Rem}[Ass]
Using the same formula $(a,b) \mapsto \{ \left< a, s \right> , \left< b, t \right> \}_\Pi$ for $a,b \in A$ in the associative world, gives rise to a map $A \otimes A \to A \otimes A$. One checks that the theory of Poisson brackets in the associative world is equivalent to the theory of double Poisson brackets of van den Bergh \cite{vdb} in the case where the underlying algebra is free.
Moreover, the embedding of the "Lie" world into the "Ass" world preserves non-degeneracy as will be clear from the proof of lemma \ref{lem:emb} below.
\end{Rem}

\begin{Rem}
Note that the notion of non-degeneracy does not descend to specialization and is easier to satisfy.
\end{Rem}
\emptycomment{
\begin{Lem}
Any bivector field is the direct sum of a non-degenerate one and the zero after a linear change of coordinates.
\end{Lem}
\begin{proof}
A matrix with entries in the free associative algebra is not a zero-divisor iff it does not annihilate any vector with associative entries. If it does, then by looking at terms with a given suffix one finds that the rows are linearly dependent over the scalars. Inductively, one constructs a linear change of coordinates.
\end{proof} 
}
\begin{Exa}
The following two bivector fields can readily be seen to be non-degenerate Poisson structures.
\begin{align*}
 \Pi_\text{KKS} &= \left< \partial_x, [x,\partial_x] \right> \\
\Pi_\text{symp} &= \left < \partial_x, \partial_y \right>
\end{align*}
More examples are constructed by taking direct sums of those, which will also be denoted by the same symbol if only one type is used.
\end{Exa}
\begin{Lem}["Weinstein's splitting theorem"]
Given a Poisson structure $\Pi$, one finds coordinates, i.e. free generators $x_i,y_i,z_j$ of the underlying Lie algebra, such that $\Pi = \sum \left< \partial_{x_i}, \partial_{y_i} \right> + \tilde{\Pi}$, where $\tilde{\Pi}$ does only contain terms in the variables $z_j$ and does not contain any constant terms.
\end{Lem}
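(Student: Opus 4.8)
The plan is to mimic the classical proof of Weinstein's splitting theorem. The point is that in the ``Lie world'' every object is automatically a formal germ at the origin, so there are no convergence issues and the only genuine input is a homological vanishing statement; I would carry this out by induction on the number $n$ of free generators of $\O(\L)$. If $\Pi$ has no constant term there is nothing to prove: take all generators to be $z_j$ and $\tilde\Pi=\Pi$. Otherwise the constant term of $\Pi$ is a nonzero skew-symmetric bilinear form on $\kk^n$, and a linear change of coordinates --- which is automatically a change of free generators --- brings it to the standard block form of some rank $2k$. This puts $\Pi$ in the shape
\[
\Pi \;=\; \Pi_0 + R, \qquad \Pi_0 := \textstyle\sum_{i=1}^{k}\langle\partial_{x_i},\partial_{y_i}\rangle,
\]
where $\Pi_0$ is the non-degenerate constant Poisson structure in the pairs $(x_i,y_i)$ (a direct sum of copies of $\Pi_\text{symp}$) and every coefficient of $R$ lies in the augmentation ideal of $\mathcal U(L_n)$, i.e.\ $R$ has no constant term.

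I would then straighten $\Pi$ so that the remainder involves none of the $x_i,y_i$. Filter $\vf^\bullet$ by powers of the augmentation ideal of the coefficients. Because $\Pi_0$ has constant coefficients, $[\Pi_0,-]$ is the Lichnerowicz differential $d_{\Pi_0}$ and it lowers this filtration by one; expanding $0=[\Pi,\Pi]=2[\Pi_0,R]+[R,R]$ filtration degree by filtration degree shows, by the usual Maurer--Cartan bookkeeping, that the leading part of (the successively corrected) $R$ is $d_{\Pi_0}$-closed --- here one also uses the elementary fact that $d_{\Pi_0}$ of anything has no component lying purely in the $z$-variables, so the ``good'' quadratic obstructions must already vanish. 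The crux is the identification
\[
H^{\bullet}_{\Pi_0}(\L)\;\cong\;\vf^{\bullet}\bigl(\L(z_1,\dots,z_{n-2k})\bigr),
\]
the polyvector-field analogue of the acyclicity of $\Omega(\L_n)$: it follows from the homotopy built from the Euler vector field of the symplectic block in the $x_i,y_i$-directions together with a Künneth splitting with respect to the $(x_i,y_i)$- versus the $(z_j)$-variables. Hence in each filtration degree the leading part of $R$ is of the form $[D,\Pi_0]$ for a $D\in\vf^1$ whose coefficients lie deep in the lower central series --- so $\exp(D)\in\Aut(L_n)$ has trivial linear part --- plus a term involving only the $z$-variables. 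Replacing $\Pi$ by $(\exp D)_*\Pi$ removes the $(x_i,y_i)$-dependence in that degree and fixes $\Pi_0$, and composing these corrections over all degrees gives a product of exponentials that converges in the complete filtration to an automorphism $\Phi$ with $\Phi_*\Pi=\Pi_0+\tilde\Pi$, where $\tilde\Pi$ only involves the generators, now renamed $z_1,\dots,z_{n-2k}$. Since $\Phi$ has trivial linear part, $\tilde\Pi$ has no constant term, and since $[\Pi_0,\tilde\Pi]=0$ for trivial reasons (disjoint variables), $0=[\Phi_*\Pi,\Phi_*\Pi]=[\tilde\Pi,\tilde\Pi]$, so $\tilde\Pi$ is a Poisson structure on $\L(z_1,\dots,z_{n-2k})$. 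Applying the inductive hypothesis to $\tilde\Pi$ and extending the resulting change of generators by the identity on $x_i,y_i$ finishes the proof.

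The main obstacle is the cohomological vanishing $H^{>0}_{\Pi_0}(\L)\cong\vf^{>0}(\L(z_j))$: one must check that contraction with the Euler vector field of the symplectic block really is a contracting homotopy for $d_{\Pi_0}$ on polyvector fields \emph{inside} the Lie world --- no ``loops'' are generated because $\Pi_0$ provides a perfect matching between the $\partial_{x_i}$- and $\partial_{y_i}$-legs, just as in the proof that $\Omega(\L_n)$ is acyclic --- and that the corresponding Künneth decomposition is valid for these completed free objects. Once this is in place, the linear normalisation of the constant term, the order-by-order bookkeeping, and the convergence of the infinite composition of exponentials are all routine.
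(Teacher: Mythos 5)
Your proposal is correct and follows essentially the same route as the paper: both reduce the statement to the fact that the Lichnerowicz complex $(\vf^\bullet,[\Pi_0,\cdot])$ for the constant part $\Pi_0$ deformation retracts onto polyvector fields in the null-space variables $z_j$ with zero differential, and then conclude by gauging the Maurer--Cartan element $\Pi-\Pi_0$ through the pronilpotent gauge group (which you spell out as an explicit order-by-order composition of exponentials $\exp(D)$ with trivial linear part). Your final induction step is vacuous since $\tilde\Pi$ already has no constant term, but this is harmless.
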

\begin{proof}
Let $\Pi_0$ denote the constant part of $\Pi$. We are trying to classify deformations of $\Pi_0$ that are governed by the dg Lie subalgebra of the Schouten Lie algebra $(\vf^\bullet, [\cdot, \cdot], d=[\Pi_0, \cdot])$ where vector fields have components at least quadratic and bivector fields are at least linear. One checks that this is actually a direct summand. One checks that $(\vf^\bullet, d)$ deformation retracts onto $(\vf(z_1, \cdots, z_n), d=0)$ where the $z_j$ form a basis of the null space of $\Pi_0$. The result now follows since the gauge group in this case is pronilpotent.

\emptycomment{
The group is a pronilpotent subgroup of the group of automorphisms of the underlying Lie algebra.

is graded by homogeneity degree. Let $\mathcal{M}$ be the degree $\geq 1$ part. It contains vector fields whose components are at least quadratic and bivector fields that are at least linear. Our deformation problem is thus governed by the dgLa $(\mathcal{M}, d=[\Pi_0, \cdot])$, whose gauge group is a the pronilpotent subgroup of the group of automorphisms of the underlying Lie algebra. 

By a linear change of coordinates we find $x_i, y_i, z_j$ such that $\Pi_0 = \sum \left< \partial_{x_i}, \partial_{y_i} \right>$ and $\tilde{\Pi}$ has no constant term, but possibly depends on $x_i,y_i$. We are looking for an automorphism of the free Lie algebra that sends $\Pi$ to the required form. The relevant cohomology is $H(\vf^\bullet, d= [\Pi_0, \cdot])$. After choosing symplectic coordinates for $\Pi_0$ one straight-forwardly defines a retraction onto the part only containing $z_j$'s.
}
\end{proof}

Each bivector field $\Pi = \Sigma \left< \partial_i, \Ad_{\Pi_{ij}} \partial_j \right> \in \vf^2$ defines a map $T^*[1]\L \overset{(-)^{\Pi}}{\rightarrow} T[1]\L$ by the formula
\begin{align*}
(dx_i)^\Pi = [\Pi, x_i] = \iota_{dx_i}\Pi = 2 \Ad_{\Pi_{ij}}\partial_j
\end{align*}

\begin{Lem}
\label{lem:emb}
Let $\Pi \in \vf^2$ be non-degenerate. Then the induced map
\begin{align*}
\Omega^n \overset{(-)^\Pi}{\longrightarrow} \vf^n
\end{align*}
is injective for $n \geq 2$.
\end{Lem}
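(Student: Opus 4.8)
The plan is to reduce injectivity of $(-)^\Pi \colon \Omega^n \to \vf^n$ to the non-degeneracy hypothesis on the matrix $(\Pi_{ij})$ by exhibiting a one-sided "inverse" that makes sense on the nose even though $\Pi^{-1}$ does not. Concretely, using Lemma \ref{lem:comp} I would write a form $\omega \in \Omega^n$ and a polyvector field $(-)^\Pi(\omega) \in \vf^n$ in terms of their defining data (tuples of Lie/associative series), and observe that the map $(-)^\Pi$ is, after this identification, given by (right) multiplication of the "leg" indexed by the $i$-th cotangent direction by the matrix $(\Pi_{ij})$, via the formula $(dx_i)^\Pi = 2\Ad_{\Pi_{ij}}\partial_j$ already recorded before the statement. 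So if $(-)^\Pi(\omega) = 0$, then contracting with the coordinate vector fields $\partial_i$ (the operation $\iota_{\partial_i}$ built in the proof of Lemma \ref{lem:comp}) produces, in each multi-index slot, an equation of the form $(\text{data of }\omega)\cdot(\Pi_{ij}) = 0$ in the free associative algebra $U(L_n)$ (or a tensor power thereof). Since $\Pi$ is non-degenerate, the matrix $(\Pi_{ij})$ is not a zero-divisor, so each such equation forces the corresponding datum of $\omega$ to vanish; hence $\omega = 0$.

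The steps, in order, would be: (1) recall from Lemma \ref{lem:comp} the description of $\Omega^n$ and $\vf^n$ as built out of skew matrices with entries in $U(L_n)$, together with the contraction operators $\iota_{\partial_i}$; (2) show that $(-)^\Pi$ corresponds under these identifications to right multiplication of the appropriate index by $(\Pi_{ij})$, using the computation $(dx_i)^\Pi = 2\Ad_{\Pi_{ij}}\partial_j$ and the fact that $(-)^\Pi$ is a derivation-type extension, so higher forms get each $dx_i$-leg replaced; (3) spell out what "$(\Pi_{ij})$ not a zero-divisor" means — following the deleted remark in the excerpt, a matrix over the free associative algebra is a zero-divisor iff it annihilates some nonzero vector with entries in the free associative algebra, which one checks by a suffix/leading-term argument — and apply it slotwise; (4) conclude $\omega = 0$, noting that $n \geq 2$ is exactly what guarantees there is at least one contracted cotangent leg to run the argument on (for $n = 1$ the statement is about functions versus vector fields and the argument degenerates).

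The main obstacle I expect is step (2): being careful that "replacing each $dx_i$ by $2\Ad_{\Pi_{ij}}\partial_j$" is the correct global description of $(-)^\Pi$ on $\Omega^n$ and not just on $\Omega^1$, and in particular that the non-commutative bookkeeping (the antipode $*$ enforcing skew-symmetry, the placement of the $\Ad_{\Pi_{ij}}$ factors relative to the other legs, and the signs coming from the odd degrees of $dx_i$ and $\partial_i$) is handled so that the resulting equation really is honest right multiplication by the single matrix $(\Pi_{ij})$ in each slot. A secondary subtlety is justifying the zero-divisor characterization in the completed free associative algebra rather than the polynomial one; I would either invoke it directly from the deleted lemma or reduce to finite-dimensional truncations via Lemma \ref{lem:faith}, since an identity that fails over $\L_n$ must already fail on some $\sll(N)$, and there non-degeneracy can be tested by ordinary linear algebra over the (now finite-dimensional) coordinate ring — though one must then check that the notion of non-degeneracy used here is compatible with that reduction.
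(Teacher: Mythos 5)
Your overall strategy -- identify non-degeneracy with ``the matrix $(\Pi_{ij})$ is not a zero-divisor over the free associative algebra $A = U(L_n)$'' and then argue that $(-)^\Pi$ acts leg-by-leg by that matrix -- is the same starting point as the paper's proof, and you correctly locate the difficulty in your step (2). But that is exactly where the gap is, and it is not a bookkeeping issue that resolves in your favour: the map $(-)^\Pi$ replaces $dx_i$ by $2\Ad_{\Pi_{ij}}\partial_j$, i.e.\ by $\Pi_{ij}'\,\partial_j\,(*\Pi_{ij}'')$ in Sweedler notation, so in the bimodule $DA$ generated by the $\partial_j$ the element $\Pi_{ij}$ is \emph{split by the coproduct across both sides} of $\partial_j$. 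The induced equation on the data of $\omega$ is therefore not ``honest right multiplication by the single matrix $(\Pi_{ij})$ in each slot,'' and the non-zero-divisor hypothesis does not apply to it directly. The paper closes this gap by a change of module structure: it identifies $DA \cong (A\otimes A)\otimes_A A^m$ where $A\otimes A$ carries the twisted right $A$-action $(\alpha\otimes\beta).a = \alpha a'\otimes (*a'')\beta$, proves this twisted module is \emph{free}, and then realizes the $\Ad_\Pi$-map as $\id\otimes\phi$ with $\phi$ the injective right-multiplication map on $A^m$; injectivity follows because one tensors an injective map with a free (hence flat) module. Without this step (or an equivalent one) your argument does not go through. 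A further point your proposal leaves open is why injectivity survives the passage from $DA^{\otimes n}$ to $\Omega^n$, i.e.\ the symmetrization and the cyclic quotient $DA^{\otimes n}\otimes_{A\otimes A^{\mathrm{op}}}A$; the paper handles this with the observation that $DA^{\otimes k}$ is a free $A$-bimodule for $k\geq 1$, which is also where the hypothesis $n\geq 2$ enters.

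Your proposed fallback -- reducing to $\sll(N)$ via Lemma \ref{lem:faith} and testing non-degeneracy there by linear algebra -- cannot work, for the reason you yourself half-suspect: the paper explicitly remarks that non-degeneracy does not descend under specialization. A universally non-degenerate $\Pi$ may specialize to a degenerate bivector on every $\sll(N)$, so injectivity of $(-)^\Pi$ on $\Omega^n(\L_n)$ cannot be detected by its specializations. The zero-divisor characterization from the deleted remark is fine as an equivalent formulation of the hypothesis, but the substance of the lemma is converting that hypothesis into injectivity of the coproduct-twisted action, and that conversion is missing from your plan.
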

\begin{proof}
Let $A$ denote the universal enveloping algebra of the free Lie algebra, that is the free associative algebra. Let $DA$ denote the free $A$-bimodule generated by the symbols $\partial_i$. By symmetrizing over the permutation group in n symbols the space of n-forms $\Omega^n$ can be embedded into $DA^{\otimes n} \otimes_{A\otimes A^\text{op}} A$. The map $(-)^\Pi$, for simplicity seen as an endomorphism of $\Omega^n$, extends as follows to an injective map. The non-degeneracy assumption is equivalent to $\Pi$ defining an injective map of a free right-$A$ module and consequently (using skew-adjointness of $\Pi$) also an injective map $\phi$ of a free left-$A$ module $A^m$ where $m$ is the dimension. More concretely, let $e_i$ be the canonical basis of $A^m$, then
\begin{align*}
    A^m &\overset{\phi}{\longrightarrow} A^m \\
    \alpha_i e_i &\longmapsto \alpha_i \Pi_{ij} e_j
\end{align*}
is an injective map of free left $A$modules.
Consider $A \otimes A$ as a right-$A$ module using the following $A$-bimodule structure
\begin{align*}
    (\alpha \otimes \beta).a = \alpha a^{'}\otimes *a^{''} \beta,
\end{align*}
where Sweedler's notation is used, and $*$ denotes the antipode. This defines a free $A$-module, as the invertible map $\alpha \otimes \beta \mapsto \alpha^{'} \otimes \beta^{''}$ gives a map to an obviously free $A$-module. The left-$A \otimes A^\text{op}$ coming from the outer $A$-bimodule structure is left as it is. The natural map $DA \to DA$ determined by $\Pi$ is now given by $\id \otimes \phi$, that is
\begin{alignat*}{3}
    DA &\cong (A\otimes A) \otimes_A A^m &\quad& \xrightarrow{\id \otimes \phi} \quad& (A\otimes A) \otimes_A A^m &\cong DA \\
    \alpha \partial_i \beta &\cong (\alpha \otimes \beta)  \otimes e_i &\quad& \longmapsto \quad&  (\alpha \otimes \beta) \otimes \Pi_{ij} e_j = (\alpha \Pi_{ij}^{'} \otimes *\Pi_{ij}^{''} \beta) \otimes e_j &\cong \alpha \ad_{\Pi_{ij}} (\partial_j) \beta,
\end{alignat*}
and is injective since we tensor an injective map with a free module. To finish the proof, one notices that $DA^{\otimes k}$ is a free $A$-bimodule for all $k \geq 1$ and $DA^{\otimes n} \otimes_{A\otimes A^\text{op}} A \cong DA^{\otimes n-k} \otimes_{A \otimes A^\text{op}} DA^{\otimes k}$ for any $k$ (here $DA^{\otimes n-k}$ being an $A$-bimodule is naturally a right-$A\otimes A^\text{op}$ module.
\end{proof}

\begin{Lem}
If $\Pi$ is Poisson, there is a canonical Lie bracket on $\Omega^\bullet$ such that $(-)^\Pi$ is a map of dgLAs.
\end{Lem}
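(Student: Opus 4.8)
The plan is to read this as the Lie-world incarnation of the classical statement that, on a Poisson manifold, the sharp map
\[
\pi^\sharp\colon\bigl(\Omega^\bullet,\,d_{\mathrm{dR}},\,[\cdot,\cdot]_{\mathrm{Koszul}}\bigr)\longrightarrow\bigl(\vf^\bullet,\,[\Pi,-],\,[\cdot,\cdot]\bigr)
\]
is a morphism of differential graded Lie algebras (with bracket of degree $-1$). There are two things to establish: that $(-)^\Pi$ is a chain map for the de Rham differential on $\Omega^\bullet$ and $[\Pi,-]$ on $\vf^\bullet$, and that $\Omega^\bullet$ carries a degree $-1$ Lie bracket making $(-)^\Pi$ such a morphism.

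For the chain-map statement I would work on coordinate Lie algebras. Let $\Phi\colon\O(T[1]\L_n)\to\O(T^*[1]\L_n)$ be the Lie algebra morphism inducing $(-)^\Pi$, so $\Phi(x_i)=x_i$ and $\Phi(dx_i)=(dx_i)^\Pi=[\Pi,x_i]$. Both $\Phi\circ d_{\mathrm{dR}}$ and $[\Pi,-]\circ\Phi$ are $\Phi$-derivations of degree $+1$, hence are determined by their values on the generators $x_i,dx_i$. On $x_i$ they agree, both being $[\Pi,x_i]$. On $dx_i$ the first vanishes, while the second equals $[\Pi,[\Pi,x_i]]=\tfrac12[[\Pi,\Pi],x_i]$ by the graded Jacobi identity, which is $0$ because $\Pi$ is Poisson. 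Applying $F(-)$ yields $(d\omega)^\Pi=[\Pi,\omega^\Pi]$ for all $\omega\in\Omega^\bullet$; along the way this is also the verification that $[\Pi,-]$ squares to zero.

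For the bracket, the route I would take is to write it down by an explicit Koszul-type formula and verify the axioms universally. On $\Omega^{\le 1}$ one takes the usual Cartan-calculus expression $[\alpha,\beta]_\Pi=L_{\alpha^\Pi}\beta-\iota_{N_\alpha}\beta$ for $\alpha\in\Omega^1$ and $\beta\in\Omega^\bullet$ arbitrary, where $N_\alpha$ is the universal $(1,1)$-tensor obtained by composing $\Pi^\sharp\colon T^*[1]\L_n\to T[1]\L_n$ with the musical map of $d\alpha\in\Omega^2$ (concretely: the product of the matrix $(\Pi_{ij})$ with the skew matrix representing $d\alpha$ via Lemma \ref{lem:comp}); the operations $L_{\alpha^\Pi}$, contraction $\iota_{N_\alpha}$ with a $(1,1)$-tensor, and $d$ each touch a single leg at a time and are therefore legal in the Lie world. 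For $\alpha\in\Omega^p$ with $p\ge 2$ one extends this by a Frölicher--Nijenhuis-type formula built from $d$ and the single-leg contraction of a form with $\Pi$ (which produces a vector-valued form and, unlike the full interior product $\iota_\Pi$, creates no loop). Granting such a formula, skew-symmetry, the shifted Jacobi identity, $d$-compatibility, and the morphism property $[\omega_1,\omega_2]_\Pi^\Pi=[\omega_1^\Pi,\omega_2^\Pi]$ are all identities between universal forms, so by Lemma \ref{lem:faith} it suffices to check them after specialization to $\sll(N)$ with its Killing form, where they are the standard facts of Poisson geometry (the Koszul bracket and the Lie-algebroid morphism property of $\pi^\sharp$). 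When $\Pi$ is non-degenerate, Lemma \ref{lem:emb} makes $(-)^\Pi$ injective in degrees $\ge 2$ (and it is the identity in degree $0$ and injective in degree $1$ by non-degeneracy, as in the proof of Lemma \ref{lem:emb}), which forces uniqueness and so justifies calling the bracket canonical.

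The main obstacle is precisely the construction of the bracket in higher degrees. Because $\Omega^\bullet$ carries no product, the textbook descriptions of the Koszul bracket — as the derivation extension of its value on $\Omega^{\le 1}$, or as the failure of $[\iota_\Pi,d]$ to be a derivation of $\wedge$ — are unavailable; indeed $\iota_\Pi$ itself is forbidden, since contracting a form with the bivector $\Pi$ closes up a loop. The real work is therefore the bookkeeping that rewrites the classical Koszul bracket of two universal forms as a loop-free, hence universal, expression (equivalently: the verification that the classical Koszul bracket restricts to the image of the specialization maps, so that Lemma \ref{lem:faith} applies). A tempting shortcut — that in the non-degenerate case $\pi^\sharp$ is an isomorphism and one could simply transport the Schouten bracket — fails here, because a non-degenerate (non-zero-divisor) matrix over $\mathcal U(L_n)$ need not be invertible, so $(-)^\Pi$ need not be surjective.
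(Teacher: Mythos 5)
Your overall strategy --- write down a universal, loop-free formula for the Koszul bracket and then verify all the identities (Jacobi, compatibility with $d$, the morphism property of $(-)^\Pi$) by specializing to $\sll(N)$ via Lemma \ref{lem:faith} --- is exactly the paper's strategy, and your chain-map computation on generators is fine. But the proof has a genuine gap, and you have located it yourself: the bracket is never actually constructed in degrees $\geq 2$. The phrase ``granting such a formula'' is doing all the work. The degree $\leq 1$ Cartan-calculus expression does not determine the bracket of two forms of degree $\geq 2$ (there is no product on $\Omega^\bullet$ to extend by derivations, as you note), and a ``Fr\"olicher--Nijenhuis-type formula built from $d$ and single-leg contraction'' is precisely the nontrivial combinatorial object whose existence is the content of the lemma. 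Without exhibiting it, Lemma \ref{lem:faith} has nothing to apply to.

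The missing idea in the paper is short but essential: since $T[1]\L_n$ is itself a Lie space and $\Omega^\bullet(\L_n) = F(T[1]\L_n)$, one can use the already-developed machinery of bivector fields and their induced brackets \emph{on the space $T[1]\L_n$}. Concretely, the paper defines the odd bivector field
\[
\widetilde{\Pi} = 2 \left< \tfrac{\partial}{\partial dx_i} \ad_{\Pi_{ij}} \tfrac{\partial}{\partial x_j} \right> +  \left< \tfrac{\partial}{\partial x_i} \ad_{d\Pi_{ij}} \tfrac{\partial}{\partial x_j} \right>
\]
and takes the induced bracket $\{\alpha,\beta\}_{\widetilde\Pi}$ on $F(T[1]\L_n)$. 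This produces the bracket in all form degrees at once, is manifestly loop-free because it is a universal bivector field in the sense already set up, and specializes to the classical Koszul bracket (this is the classical fact that the Koszul bracket is the Poisson bracket of an odd Poisson structure on $T[1]M$, cf.\ the paper's Courant-algebroid remark); Lemma \ref{lem:faith} then finishes the argument exactly as you intended. Your closing observations --- that $\iota_\Pi$ is forbidden, and that one cannot transport the Schouten bracket because a non-zero-divisor matrix over $\mathcal U(L_n)$ need not be invertible --- are both correct and show you understood why the naive routes fail; what is missing is the positive construction that replaces them.
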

\begin{proof}
By lemma \ref{lem:faith} it is enough to define the bracket on the left hand side that specializes to the classical one. To that purpose we define an odd Poisson bivector field on $T^*[1]\L_n$ as follows
\[
\widetilde{\Pi} = 2 \left< \tfrac{\partial}{\partial dx_i} \ad_{\Pi_{ij}} \tfrac{\partial}{\partial x_j} \right> +  \left< \tfrac{\partial}{\partial x_i} \ad_{d\Pi_{ij}} \tfrac{\partial}{\partial x_j} \right>,
\]
which satisfies the required properties.
\end{proof}

For later convenience we spell out the bracket on 1-forms, that is of the Lie algebra of 1-forms denoted by $\Omega^1_\Pi$. Let $\alpha = \left< \alpha_i, dx_i \right>$ $\beta = \left< \beta_j dx_j \right>$ be 1-forms, then
\begin{align*}
[ \left< \alpha_i, dx_i \right>, \left< \beta_j, dx_j \right> ]_\Pi &= \left< [ \alpha^\Pi, \beta_i ] - [\beta^\Pi, \alpha_i] , dx_i \right> + 2 \left< \alpha_i, \Ad_{d\Pi_{ij}} \beta_j \right>.
\end{align*}
The dgLA $\vf^\bullet$ is canonically filtered by (polynomial degree $-1$ + form degree $-1$). This also defines a filtration on $\Omega^\bullet_\Pi$, and by the splitting lemma it is clear that the zeroth associated graded component is a copy of a $\mathfrak{gl}_{2k}$, where $2k$ is the dimension of the symplectic vector space determined by the constant term of $\Pi$. In particular, $\Omega^1_\Pi$ is then an extension of a $\mathfrak{gl}_{2k}$ by a pronilpotent Lie algebra, and hence it's easily integrated to a group $Exp(\Omega^1_\Pi)$ together with its actions on forms and polyvector fields.
Moreover, $Exp(\Omega^1_\Pi)$ is an central extension of a Lie algebra of derivations of $\L_n$ by a factor, which in the non-degenerate case can be identified with the space of Casimir functions, that is $\{ f \in F(L_n) \ | \ [\Pi, f] = 0 \}$.

\begin{Rem}
The bivector field $\tilde{\Pi}$ is actually Poisson and the map $(-)^\Pi$ is a Poisson map, for a straightforward definition of Poisson maps. However, we shall have no use for this slightly stronger fact.
\end{Rem}

\begin{Rem}
A more geometric construction goes as follows. Let $\mathcal{C} =T^*[2]T[1]\L_n = T[1]T^*[1]\L_n$ denote the standard Courant algebroid. Let $Q \in F(\mathcal{C})$ denote the Euler vectorfield on $T[1]M$, which is a Hamiltonian for the de Rham differential on $\mathcal{C}$. The map $(-)^\Pi$ can be seen as the composition 
\[
T^*[1]\L_n \to \mathcal{C} \overset{\exp(\{\Pi, \cdot\})}{\to} \mathcal{C} \to T[1]\L_n,
\]
where the first and third maps are canonical. Thus the Courant bracet $\{ \cdot, \{ Q, \cdot \} \}$ gets twisted to $\{ \cdot, \{ \widetilde{Q}, \cdot \} \}$, where $\widetilde{Q} = \exp(\{\Pi, \cdot\})(Q)$. This is indeed a Poisson bracket on $T[1]\L_n$ iff it is at most quadratic, i.e. iff $\{ \Pi, \{ \Pi, Q \} \} = 0$, i.e. iff $\Pi$ is Poisson. In that case $\widetilde{Q} = Q + \widetilde{\Pi}$.
\end{Rem}

Maurer-Cartan elements in $\Omega^\bullet$ thus inject to the ones in $\vf^\bullet$. Namely, a Maurer-Cartan element $\sigma \in \Omega^2$ defines a new Poisson bracket $\Pi + \sigma^\Pi$. In terms of matrices (cf lemma \ref{lem:comp}) this Poisson structure is given by $\Pi - \Pi \sigma \Pi = \Pi (1 - \Pi \sigma)$. We call $\sigma \in \Omega^2$ non-degenerate if the matrix $(1 - \Pi \sigma)$ is invertible. This is in particular sufficient for $\Pi + \sigma^\Pi$ to be a non-degenerate Poisson structure. Let us denote by $\Pi_0$ and $\sigma_0$ the constant terms, then the condition is equivalent to $(1 - \Pi_0 \sigma_0)$ being non-degenerate, i.e. invertible.

\begin{Cor}
The set of non-degenerate Maurer-Cartan elements in $\Omega^\bullet$ is an $\Exp(\Omega^1_\Pi)$ homogeneous space isomorphic to $\mathcal{P} := \{\Pi+\sigma^\Pi \ | \ [\Pi+\sigma^\Pi, \Pi+\sigma^\Pi] = 0, (1 - \Pi_0 \sigma_0) \text{ invertible} \}$ where the action on the latter is by automorphism of the free Lie algebra. The stabilizer Lie algebra at $\sigma$ is isomorphic to $\Omega^0$ with Lie bracket induced by the Poisson bracket $\Pi + \sigma^\Pi$.
\end{Cor}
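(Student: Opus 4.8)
The plan is to unwind what the corollary asserts into three separate claims and verify each. First, we need to show that the set $\mathcal{P}$ of Poisson structures of the form $\Pi+\sigma^\Pi$ (with $(1-\Pi_0\sigma_0)$ invertible) is precisely the image of the set of non-degenerate Maurer--Cartan elements $\sigma\in\Omega^2$ under the map $\sigma\mapsto\Pi+\sigma^\Pi$; second, that $\Exp(\Omega^1_\Pi)$ acts transitively on this set, with the action carried over to $\mathcal{P}$ being by automorphisms of the free Lie algebra; and third, that the stabilizer Lie algebra at $\sigma$ is $\Omega^0$ with the bracket twisted by $\Pi+\sigma^\Pi$. The injectivity of $\sigma\mapsto\sigma^\Pi$ on $2$-forms is Lemma \ref{lem:emb}, so $\mathcal P$ carries a well-defined bijection with the non-degenerate Maurer--Cartan locus; the content is in the group action.

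For the first claim I would recall the computation already recorded in the excerpt: in the matrix language of Lemma \ref{lem:comp}, $\Pi+\sigma^\Pi$ is represented by $\Pi(1-\Pi\sigma)$, and $\sigma$ is a Maurer--Cartan element of the dgLA $\Omega^\bullet_\Pi$ exactly when $[\Pi+\sigma^\Pi,\Pi+\sigma^\Pi]=0$, i.e.\ when $\Pi+\sigma^\Pi$ is Poisson. Non-degeneracy of $\sigma$ was \emph{defined} as invertibility of $(1-\Pi\sigma)$, which by the usual filtration/pronilpotency argument reduces to invertibility of the constant term $(1-\Pi_0\sigma_0)$; this gives the stated description of $\mathcal{P}$. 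For the transitivity and freeness-up-to-stabilizer statement, the key input is the structure of $\Omega^1_\Pi$ established just before the corollary: it is an extension of a $\mathfrak{gl}_{2k}$ by a pronilpotent Lie algebra, hence integrates to a group $\Exp(\Omega^1_\Pi)$ acting on $\Omega^\bullet$ and on $\vf^\bullet$. The general principle (deformation theory of a dgLA, here applied to the dgLA governing deformations of $\Pi$ inside $\vf^\bullet$, or equivalently Maurer--Cartan elements in $\Omega^\bullet_\Pi$) says that the gauge group acts transitively on Maurer--Cartan elements within each ``leaf'' once the relevant obstruction group vanishes and the group is (pro)nilpotent or a pronilpotent extension of a reductive part; Weinstein's splitting lemma (stated above) provides exactly the normal form that lets one run the obstruction calculus degree by degree, and Lemma \ref{lem:faith} lets us check the needed identities by specializing to $\sll(N)$. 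Transport of the $\Exp(\Omega^1_\Pi)$-action on $2$-forms through $(-)^\Pi$ to an action on bivector fields is then by conjugation by the corresponding automorphism of $L_n$ (since $\Omega^1_\Pi$ maps to derivations of $L_n$ modulo Casimirs, and the Casimir part acts trivially on $\mathcal P$), giving the ``by automorphisms of the free Lie algebra'' clause.

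For the stabilizer statement, I would compute the infinitesimal stabilizer: an element of $\Omega^1_\Pi$ fixes $\sigma$ iff its image under the differential $d_\sigma=[\sigma^\Pi+\Pi,\cdot]$-type map vanishes, and the kernel of $\Omega^1_\Pi\to\Omega^2_\Pi$ in the twisted dgLA is by construction $\Omega^0$ with the bracket induced from $\Pi+\sigma^\Pi$ — this is just the statement that $\Omega^0\xrightarrow{d}\Omega^1_{\Pi+\sigma^\Pi}$ has the expected kernel (the Casimir functions) and that the map $d$ intertwines brackets, both of which follow from the dgLA structure already set up. I expect the main obstacle to be making the transitivity argument fully rigorous: one must identify the correct dgLA whose Maurer--Cartan set is $\mathcal{P}$, verify that its degree-zero part integrates to precisely $\Exp(\Omega^1_\Pi)$ (the subtlety being the $\mathfrak{gl}_{2k}$-part, which is not pronilpotent, so one cannot invoke a naive nilpotent gauge-equivalence theorem — instead one uses the splitting lemma to reduce to the constant part, where the orbit of $\mathfrak{gl}_{2k}$ on non-degenerate constant $\sigma_0$ is a classical computation, together with pronilpotent gauge equivalence in positive filtration degree), and check that non-degeneracy is preserved along the orbit. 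Everything else is bookkeeping with the structures constructed earlier in the section.
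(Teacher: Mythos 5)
Your proposal is correct and follows essentially the same route as the paper: identify $\mathcal{P}$ with the non-degenerate Maurer--Cartan locus via the injectivity of $(-)^\Pi$ on $\Omega^{\geq 2}$ (Lemma \ref{lem:emb}), prove transitivity by using the splitting lemma to separate the constant term (handled by the classical $\GL_{2k}$-action on non-degenerate pairings) from the pronilpotent part (handled by gauge equivalence in the acyclic pronilpotent dgLA), and identify the stabilizer with $\Omega^0$ under the twisted bracket. The only cosmetic difference is that you compute the stabilizer directly at $\sigma$ while the paper reduces to $\sigma=0$ by transitivity; these are equivalent.
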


\begin{proof}
The isomorphism alluded to is given by $(-)^\Pi$, which is injective on forms of degree $\geq 2$. Transitivity of the $\Exp(\Omega^1_\Pi)$ action essentially follows from acyclicity of $\Omega^\bullet$. More precisely, by the splitting lemma we can write $\Exp(\Omega^1_\Pi)$ as a prounipotent extension of $\GL_k$. The action of $\Exp(\Omega^1_\Pi)$ descend to this $\GL_k$ by only considering the constant terms. It is then just the usual action of $\GL_k$ on symplectic forms on a vector space. The invertibility condition of the theorem ensures that we get a symplectic form again, that is it does not become degenerate. Let now $\sigma \in \Omega^2$ be any non-degenerate Maurer-Cartan element. By applying an element from $\GL_k$ we can assume that $\Pi + \sigma^\Pi$ has the same constant term as $\Pi$, and thus $\sigma$ lies in the pronilpotent part of $\Omega^\bullet$. Now the standard argument in an acyclic pronilpotent dgLA shows that $\sigma$ is gauge equivalent to $0$.

For the statement about the stabilizer, we only need to determine it at $\sigma = 0$ by the transitivity of the $\Exp(\Omega^1_\Pi)$ action, where it is clear.
\end{proof}

\begin{Rem}
If $\Pi$ has no constant part, then the assumption on non-degeneracy is void.
\end{Rem}

\begin{Lem}
The set $\mathcal{P}$ is in bijection with non-degenerate closed 2-forms. More precisely,
\[
\mathcal{P} = \{\Pi^\omega = (1- \Pi \omega)^{-1} \Pi \ | \ d\omega = 0, \text{ $(1- \Pi_0 \omega_0)$ invertible }\} \cong \{\omega \in \Omega^{2,cl} \ | \ \text{$(1- \Pi_0 \omega_0)$ invertible } \}.
\]
\end{Lem}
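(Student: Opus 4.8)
The plan is to transport everything, via Lemma~\ref{lem:comp}, to matrices over the free associative algebra $A=\mathcal{U}(L_n)$, and to deduce the equivalence ``Poisson $\Leftrightarrow$ closed'' by specialization. Throughout, $\Pi$ is the fixed non-degenerate Poisson structure; recall that $\Pi+\sigma^\Pi$ corresponds to the matrix $\Pi-\Pi\sigma\Pi$, and that whenever $(1-\Pi_0\sigma_0)$ is invertible over $\kk$ the matrix $1-\Pi\sigma$ is invertible over $A$ (its constant part is the unit $1-\Pi_0\sigma_0$ and the remainder lies in the pronilpotent part, so the geometric series converges). I first introduce the change of parameters
\[
\sigma\ \longmapsto\ \omega:=-\sigma(1-\Pi\sigma)^{-1},\qquad \omega\ \longmapsto\ \sigma:=-\omega(1-\Pi\omega)^{-1}.
\]
A one-line computation gives $1-\Pi\sigma=(1-\Pi\omega)^{-1}$, so these assignments are mutually inverse, they match the non-degeneracy conditions via $1-\Pi_0\omega_0=(1-\Pi_0\sigma_0)^{-1}$, and
\[
\Pi-\Pi\sigma\Pi=(1-\Pi\sigma)\Pi=(1-\Pi\omega)^{-1}\Pi=\Pi^\omega.
\]
That the matrix $\omega$ is again skew for the antipode $*$ (hence a genuine $2$-form) follows from $\Pi^*=-\Pi$, $\sigma^*=-\sigma$, the fact that $*$ reverses products, and the identity $\sigma(1-\Pi\sigma)^{-1}=(1-\sigma\Pi)^{-1}\sigma$.

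The core claim is: for $\omega$ with $(1-\Pi_0\omega_0)$ invertible, $\Pi^\omega$ is Poisson if and only if $d\omega=0$. Since $(1-\Pi\omega)^{-1}$ is invertible and $\Pi$ is a non-zero-divisor, $\Pi^\omega=(1-\Pi\omega)^{-1}\Pi$ is again a non-zero-divisor, i.e.\ non-degenerate; hence by Lemma~\ref{lem:emb} the map $(-)^{\Pi^\omega}\colon\Omega^3\to\vf^3$ is injective. It therefore suffices to prove the universal identity
\[
[\Pi^\omega,\Pi^\omega]\ =\ c\,(d\omega)^{\Pi^\omega}
\]
in $\vf^3(\L_n)$ for a fixed nonzero scalar $c$; granting it, $[\Pi^\omega,\Pi^\omega]=0\Leftrightarrow(d\omega)^{\Pi^\omega}=0\Leftrightarrow d\omega=0$. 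Both sides lie in $\vf^3(\L_n)$, so by Lemma~\ref{lem:faith} it is enough to check the identity after specialization to $\sll(N)^{\times n}$ for $N$ large. There $\Pi$ specializes to a genuine Poisson bivector $\Pi_\g$ (the identity $[\Pi,\Pi]=0$ descends), $\omega$ to a $2$-form $\omega_\g$, $\Pi^\omega$ to the formal bivector $(1-\Pi_\g\omega_\g)^{-1}\Pi_\g$ near the origin, and $(-)^{\Pi^\omega}$ to $\wedge^3(\Pi^\omega_\g)^\sharp$; the required equality is then precisely the classical formula for the Schouten square of the gauge transform of a Poisson structure by a $2$-form (gauge transformations of Poisson structures in the sense of \v{S}evera and Weinstein), with $c$ the universal constant appearing there. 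Although non-degeneracy does not survive specialization, this does not matter: we only use the \emph{identity} downstairs, which holds regardless, together with injectivity of $(-)^{\Pi^\omega}$ \emph{upstairs}.

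Assembling the pieces: by its definition, $\mathcal{P}=\{\Pi+\sigma^\Pi\ |\ \Pi+\sigma^\Pi\text{ Poisson},\ (1-\Pi_0\sigma_0)\text{ invertible}\}$. Reparametrizing by $\sigma\leftrightarrow\omega$, using $\Pi+\sigma^\Pi=\Pi^\omega$ and the equivalence of the previous paragraph, this set equals $\{\Pi^\omega=(1-\Pi\omega)^{-1}\Pi\ |\ d\omega=0,\ (1-\Pi_0\omega_0)\text{ invertible}\}$, which is the asserted description of $\mathcal{P}$. Finally $\omega\mapsto\Pi^\omega$ is a bijection from $\{\omega\in\Omega^{2,cl}\ |\ (1-\Pi_0\omega_0)\text{ invertible}\}$ onto $\mathcal{P}$: restricted to closed $\omega$'s, the reparametrization is a bijection onto $\{\sigma\ |\ \Pi+\sigma^\Pi\text{ Poisson},\ (1-\Pi_0\sigma_0)\text{ invertible}\}$, and $\sigma\mapsto\Pi+\sigma^\Pi$ is surjective onto $\mathcal{P}$ by definition and injective because $(-)^\Pi$ is injective on $\Omega^2$ (Lemma~\ref{lem:emb}); alternatively, injectivity of $\omega\mapsto\Pi^\omega$ is immediate by cancelling the non-zero-divisor $\Pi$ and the units $(1-\Pi\omega_i)^{-1}$ from $\Pi^{\omega_1}=\Pi^{\omega_2}$.

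The one genuinely non-formal input --- and the step I expect to demand the most care --- is the universal identity $[\Pi^\omega,\Pi^\omega]=c\,(d\omega)^{\Pi^\omega}$: one must fix the signs and the constant, make precise the degreewise sense of the series $\Pi^\omega$, and quote or re-derive the classical gauge-transformation computation on $\sll(N)^{\times n}$ in the formal neighbourhood of the origin. The rest is bookkeeping with matrices over $A$ and the antipode.
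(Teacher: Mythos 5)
Your proof is correct, and its first half --- the reparametrization $\sigma\leftrightarrow\omega$ obtained by solving $\Pi-\Pi\sigma\Pi=(1-\Pi\omega)^{-1}\Pi$ over matrices in the free associative algebra, together with the observation that the two invertibility conditions match because $1-\Pi\sigma=(1-\Pi\omega)^{-1}$ --- is the same as the paper's (up to a sign: solving the equation literally gives your $\omega=-\sigma(1-\Pi\sigma)^{-1}$, while the paper's displayed formula \eqref{eq:gaugegauge} omits the minus). Where you genuinely diverge is the key equivalence ``Poisson $\Leftrightarrow$ closed''. The paper argues group-theoretically: it defines a cocycle $C:\Exp(\Omega^1_\Pi)\to\Omega^2$ by $\phi.\Pi=\Pi^{C(\phi)}$, uses the transitivity of this action on $\mathcal{P}$ established in the preceding corollary to identify the image of $\mathcal{P}$ under \eqref{eq:gaugegauge} with the image of $C$, computes the associated Lie-algebra cocycle to be $\lambda\mapsto d\lambda$ (so the image lands in closed forms), and obtains surjectivity onto all non-degenerate closed forms by integrating a Moser-type flow $\alpha_t=\lambda(1+t\Pi\omega)^{-1}$. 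You instead establish the universal identity $[\Pi^\omega,\Pi^\omega]=c\,(d\omega)^{\Pi^\omega}$ by specialization via Lemma~\ref{lem:faith} and conclude with the injectivity of $(-)^{\Pi^\omega}$ on $\Omega^{\geq 2}$ from Lemma~\ref{lem:emb}; this is exactly the alternative the paper itself sketches in the remark immediately following the lemma. Your route settles both inclusions in one stroke and dispenses with the Moser argument entirely, at the price of having to pin down the constant and the formal-series bookkeeping when importing the classical \v{S}evera--Weinstein gauge-transformation formula from ordinary Poisson geometry; the paper's route stays internal to the Lie world and simply reuses the transitivity statement it has already proven. Both are complete proofs.
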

\begin{proof}
One checks that each element in $\Pi \in \mathcal{P}$ is of the form $(1- \Pi \omega)^{-1} \Pi$ for a unique $\omega \in \Omega^2$, by solving the equation of matrices with entries in the free associative algebra,
\begin{align*}
    \Pi - \Pi \sigma \Pi = (1- \Pi \omega)^{-1} \Pi,
\end{align*}
for $\omega$, namely one gets
\begin{align}
\label{eq:gaugegauge}
    \omega = \sigma (1 - \Pi \sigma)^{-1}.
\end{align}
Uniqueness follows since one uses non-degeneracy of $\Pi$. Moreover, this is well-defined, since by definition of $\mathcal{P}$ the matrix $(1 - \Pi \sigma)$ is invertible. Moreover, since $(1- \Pi \sigma) = (1 - \Pi \omega)^{-1}$, we see that the invertiblity condition is the same as in the definition of $\mathcal{P}$.
It remains to check that $d\sigma + [\sigma, \sigma]_\Pi$ is equivalent to $dw = 0$. For this purpose, let us define the map $C: \Exp(\Omega^1_\Pi) \to \Omega^2$, by the assignment $C(\phi) = \omega$ for $\phi . \Pi = \Pi^\omega$ and $\phi \in \Exp(\Omega^1_\Pi)$. One checks that this is a group 1-cocyle. Since $\Exp(\Omega^1_\Pi)$ acts transitively on $\mathcal{P}$, the image of $C$ is exactly the image of $\mathcal{P}$ under the map defined by the formula \eqref{eq:gaugegauge}. The associated Lie algebra 1-cocycle $c$ is given by $\lambda \mapsto d\lambda$, which implies that any $\omega$ defined by \eqref{eq:gaugegauge} is indeed closed. To show that we get any non-degenerate closed two-form, we first reduce to the case where $\Pi_0 \omega_0 = 0$ by using the $\GL_{2k}$ action. By solving a Moser flow type equation one shows that there is a one-parameter family $\phi_t \in \Exp(\Omega^1_\Pi)$ such that $C(\phi_t) = t \omega$. More precisely let $\omega = d\lambda$ and $\dot{\phi} \phi^{-1} = \alpha_t$, the Moser equation is then
\begin{align*}
    c(\alpha_t) + (\alpha_t).(t \omega) = \omega,
\end{align*}
with a solution given by $\alpha_t = \lambda (1 + t \Pi \omega)^{-1}$.

\emptycomment{
\begin{align*}
    \Exp(\Omega^1_\Pi) &\overset{C}{\longrightarrow} \Omega^2 \\
    \phi &\longmapsto \omega_\phi,
\end{align*}
where $

{\bf Claim 1: Each $\Pi \in \mathcal{P}$ is of the form $(1- \Pi \omega)^{-1} \Pi$ for a unique $\omega \in \Omega^2$}

One readily solves for $\omega$. Moreover, the assignment $C(\phi) = \omega$ for $\phi . \Pi = \Pi^\omega$ and $\phi \in \Exp(\Omega^1_\Pi)$ defines a group cocycle with values in $\Omega^2$. The Poisson condition now implies that the associated Lie cocycle has values in closed one-forms, which implies the result.
}
\end{proof}

\begin{Rem}
The cocycle $C: \Exp(\Omega^1_\Pi) \to \Omega^{2,cl}$ can be lifted to a group cocycle with values in $\Omega^1$ if there is a Liouville vector field for the Poisson structure, that is if there exists $X$ such that $[\Pi, X] = \Pi$. This is in particular true for any homogeneous Poisson structure. For the KKS Poisson structure this is used in \cite{ANXZ}.
\end{Rem}

\begin{Rem}
A different proof for the last steps can be obtained by showing that the formula
\begin{align*}
    [\Pi^\omega, \Pi^\omega] = 2 (d\omega)^{\Pi^\omega}
\end{align*}
holds for all non-degenerate two-forms $\omega$. One quick way of seeing this is by using lemma \ref{lem:faith} it can be reduced to the same formula in "ordinary" differential geometry, where it follows for symplectic $\Pi$ by the usual formula
\begin{align*}
    [\Pi, \Pi] = -2 (d(\Pi^{-1}))^{\Pi}
\end{align*}
and for general $\Pi$ by a density argument.
\end{Rem}

\emptycomment{
\begin{Rem}
TODO: proof formula using gauge transformation of Dirac structures.
\end{Rem}
}

\subsection{Moment Maps}
Let $\rho = \sum [x_i, \partial_i] \in T^*[1]\L_n$ denote the canonical action vector field. Note that it is independent of the choice of coordinates.
\begin{Rem}
Viewed under the natural embedding of Lie into associative algebras, $\rho$ corresponds to the cannonical element in $\Der(A, A \otimes A)$ that maps $a \mapsto 1 \otimes a - a \otimes 1$.
\end{Rem}

One defines the operator $\iota_\rho : \Omega^\bullet \to \O(T^*[1]\L_n)$ by the following procedure. Adjoin an extra variable $t$ and define $\rho^t := \left<t, \rho \right>$ as in the proof of lemma \ref{lem:comp}. Now write $\rho^t$ as $\left< \partial_i, \rho_i^t \right>$ and define a derivation $\iota_{\rho^t}$ of degree $-1$ on $T^*[1](\L_n \times \L(t)$ by sending $dx_i \mapsto \rho_i^t$, which descends to a derivation on $\Omega^\bullet(\L_n \times \L(t))$. The operator $\iota_rho$ is now defined by the formula
\begin{align*}
    \iota_{\rho^t} \alpha = \left<t, \iota_\rho(\alpha) \right> \quad \text{for }\forall \alpha \in \Omega^\bullet(\L_n)
\end{align*}

\begin{Lem}
\begin{alignat*}{2}
    \iota_{\rho^t}(d \alpha) &= [\alpha,t] && \quad \forall \alpha \in F(\L_n) \\
    \iota_\rho \left<dx_i, \Ad_{\omega_{ij}} dx_j \right> &= 2 \left<dx_i, \Ad_{x_i} \Ad_{\omega_{ij}} dx_j \right> && \quad \forall \left<dx_i, \Ad_{\omega_{ij}} dx_j \right> \in \Omega^2(\L_n) \\
\end{alignat*}
\end{Lem}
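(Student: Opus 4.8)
The plan is to verify both identities directly using the description of the operators $\iota_{\rho^t}$ and $d$ on generators, together with the characterization of $1$-forms and $2$-forms from Lemma~\ref{lem:comp}. For the first identity, recall that $\iota_{\rho^t}$ is the degree $-1$ derivation of $\Omega^\bullet(\L_n \times \L(t))$ determined by $dx_i \mapsto \rho_i^t$, where $\rho^t = \langle t, \sum_j [x_j,\partial_j]\rangle = \langle \partial_i, \rho_i^t\rangle$. Writing out $\langle t, [x_j,\partial_j]\rangle$ and matching it against $\langle \partial_i, \rho_i^t\rangle$ identifies $\rho_i^t$ explicitly (it is essentially $\mathrm{ad}^*$ of something involving $t$ and $x_i$). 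Then for $\alpha \in F(\L_n)$ we have $d\alpha = \sum_i \langle dx_i, \tfrac{\partial\alpha}{\partial x_i}\rangle$ by the definition of the partial derivatives, so $\iota_{\rho^t}(d\alpha) = \sum_i \langle \rho_i^t, \tfrac{\partial\alpha}{\partial x_i}\rangle$. The point is that this last expression should be recognized as $[\alpha, t]$, i.e. as the infinitesimal adjoint/coadjoint action contracted against $t$; concretely one uses that $\sum_i \langle [x_i, u_i], v_i \rangle$-type sums reassemble via invariance of the pairing into the action of $\rho$ on $\alpha$, and the defining property $\iota_{\rho^t}(\alpha) = \langle t, \iota_\rho(\alpha)\rangle$ lets one strip off $t$. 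The cleanest way to nail the constant is to specialize to a concrete quadratic Lie algebra $\g$ (using Lemma~\ref{lem:faith}), where $\rho$ becomes the genuine coadjoint action vector field and the identity $\iota_\rho(d\alpha) = \{$action of $\g$ on functions$\}$ is the standard Cartan-type formula $\iota_{\xi}d\alpha = L_\xi \alpha$ for a function.

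For the second identity, let $\omega = \sum_{i,j}\langle dx_i, \Ad_{\omega_{ij}} dx_j\rangle \in \Omega^2(\L_n)$. Since $\iota_{\rho^t}$ is a derivation of degree $-1$, applying it to this expression and using $\iota_{\rho^t}(dx_k) = \rho_k^t$ together with the Leibniz rule across the bilinear pairing and the $\Ad$-terms (note $\iota_{\rho^t}$ kills the $x$-variables, hence acts trivially on $\omega_{ij} \in U(L_n)$) yields a sum of two terms, one from contracting the first $dx_i$ and one from the second $dx_j$. Each of these, after using invariance of the pairing to move $\Ad$'s around and the explicit form of $\rho_i^t$, contributes a copy of $\langle dx_i, \Ad_{x_i}\Ad_{\omega_{ij}} dx_j\rangle$; the skew-symmetry of $\omega$ (encoded by the antipode condition $\omega_{ij} + {*}\omega_{ji} = 0$ of Lemma~\ref{lem:comp}) is what makes the two contributions equal rather than cancel, producing the factor $2$. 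As before, the bookkeeping of signs and the factor $2$ is most safely checked by passing to $\sll(N)$ via Lemma~\ref{lem:faith}, where $\rho$ is the honest infinitesimal coadjoint action and the formula becomes $\iota_\rho \omega = $ the Lie derivative formula for a closed/arbitrary $2$-form contracted with the action field, again a standard computation.

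I expect the main obstacle to be purely organizational rather than conceptual: correctly unwinding the definition of $\rho_i^t$ from the implicit presentation $\rho^t = \langle \partial_i, \rho_i^t\rangle$ (which requires going through the same enveloping-algebra identification used in the proof of Lemma~\ref{lem:comp}), and then tracking the Koszul signs coming from the odd variables $dx_i$ and the odd derivation $\iota_{\rho^t}$, so that the coefficient comes out as exactly $2$ with the $\Ad_{x_i}$ inserted on the correct side. Because Lemma~\ref{lem:faith} reduces any such universal identity to a computation on $\sll(N)$ for large $N$, the safest route is: first prove both identities on an arbitrary quadratic $\g$ by classical Cartan calculus (where $\rho$ is the coadjoint action vector field and both statements are instances of $\iota_\xi d = L_\xi$ on functions, resp. the standard expansion of $L_\xi$ on $2$-forms), and then invoke faithfulness to conclude they hold on $\L_n$.
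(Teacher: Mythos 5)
The paper states this lemma without proof, so there is nothing to compare against; your direct verification is the argument the author evidently has in mind, and it is essentially correct. Two small points are worth tightening. First, the identification of $\rho_i^t$ is cleaner than you suggest: invariance of the pairing gives $\langle t, \sum_j [x_j,\partial_j]\rangle = \sum_j\langle \partial_j, [t,x_j]\rangle$, so $\rho_i^t=[t,x_i]$ is a plain Lie bracket (no $\ad^*$ needed), and the first identity then reduces to the chain rule $D\alpha = \sum_i \langle D(x_i), \tfrac{\partial\alpha}{\partial x_i}\rangle$ for the derivation $D=[\,\cdot\,,t]$, which is exactly the defining property of $\tfrac{\partial}{\partial x_i}$; the specialization to $\sll(N)$ via Lemma \ref{lem:faith} is a legitimate fallback (the paper explicitly endorses it), but here it is not even needed. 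Second, in the two-form computation the mechanism producing the factor $2$ is the \emph{conjunction} of the Koszul sign (the odd derivation passing the odd generator $dx_i$ to reach $dx_j$) with the antipode relation $\omega_{ij}+{*}\omega_{ji}=0$: each of these alone would make the two Leibniz terms cancel, and only together do they make them add. Your phrasing attributes the non-cancellation to the skew-symmetry alone, which is the one place where the sign bookkeeping you correctly flag as the main hazard could actually bite. Finally, your closing appeal to classical Cartan calculus genuinely covers the first identity ($\iota_\xi d\alpha = L_\xi\alpha$ on functions), but for the second identity there is no Cartan-type identity in play --- it is just the definition of contraction unwound in the matrix presentation of Lemma \ref{lem:comp} --- so the direct computation in the first half of your second paragraph is what carries that case.
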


\begin{Def}
An element $\mu \in \L_n$ is called a {\em moment map} for $\Pi$ if
\begin{align*}
[\Pi, \mu] = \rho
\end{align*}
\end{Def}

\begin{Rem}
The existence of a moment map implies non-degeneracy.
\end{Rem}
\begin{Rem}
Any linear Poisson structure, for instance the linear part of a splitting of an arbitrary Poisson structure, defines a Lie algebra in the "Lie world", which by Lazard duality is a commutative algebra $C$. Non-degeneracy is equivalent to the absence of elements $\beta \in C$ such that $\alpha \beta = 0$ $\forall \alpha \in C$, and existence of a moment map is equivalent to the existence of a unit element in $C$. Moreover, the Poisson structure is rigid, if $C$ is semi-simple, i.e. does not contain nilpotents, i.e. is the product of finite field extensions. 
\end{Rem}

\begin{Lem}
Let $\Pi \in \vf^2$ be a non-degenerate Poisson structure. If it admits a moment map, then it is unique.
\end{Lem}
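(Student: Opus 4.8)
The plan is to reduce everything to the statement that, for a non-degenerate $\Pi$, the operator $[\Pi,-]\colon \O(\L_n)\to\O(T^*[1]\L_n)$ is injective: if $\mu_1,\mu_2$ are both moment maps then $\nu:=\mu_1-\mu_2\in\O(\L_n)$ satisfies $[\Pi,\nu]=[\Pi,\mu_1]-[\Pi,\mu_2]=\rho-\rho=0$, and injectivity forces $\nu=0$.

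To use non-degeneracy I would first put $[\Pi,\nu]$ into the normal form of Lemma \ref{lem:comp}. Write $\Pi=\sum_{i,j}\langle\partial_i,\Ad_{\Pi_{ij}}\partial_j\rangle$ with $(\Pi_{ij})$ the associated non-degenerate matrix over $A:=\mathcal U(L_n)$. Recall that $[\Pi,-]$ is the degree $-1$ derivation of $\O(T^*[1]\L_n)=L(x_1,\dots,x_n,\partial_1,\dots,\partial_n)$ with $[\Pi,x_k]=2\sum_j\Ad_{\Pi_{kj}}\partial_j$, and that (as in the proof of Lemma \ref{lem:comp}) every degree-one element of $\O(T^*[1]\L_n)$ is uniquely of the form $\sum_j\Ad_{c_j}\partial_j$, $c_j\in A$; this identifies the degree-one part with the free left $A$-module $A^{\oplus n}$, with $L_n$ acting through $L_n\hookrightarrow A$ by left multiplication. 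Since $L_n$ is free, the derivation $[\Pi,-]\colon L_n\to A^{\oplus n}$ is determined by its values on the generators, and unwinding the Leibniz rule gives
\[
[\Pi,\nu]=2\,\Bigl(\tfrac{\partial\nu}{\partial x_1},\dots,\tfrac{\partial\nu}{\partial x_n}\Bigr)\cdot(\Pi_{ij}),
\]
where $\tfrac{\partial\nu}{\partial x_k}\in A$ is the $A$-valued (Fox-type) partial derivative, characterized by $\tfrac{\partial x_l}{\partial x_k}=\delta_{kl}$ and $\tfrac{\partial[a,b]}{\partial x_k}=a\tfrac{\partial b}{\partial x_k}-b\tfrac{\partial a}{\partial x_k}$, and the right-hand side is a row vector times the matrix $(\Pi_{ij})$.

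Granting this, the conclusion is immediate: $[\Pi,\nu]=0$ means the row vector $\bigl(\tfrac{\partial\nu}{\partial x_k}\bigr)_k$ is annihilated by $(\Pi_{ij})$, and non-degeneracy — which, exactly as in the proof of Lemma \ref{lem:emb}, says that multiplication by $(\Pi_{ij})$ is injective on $A^{\oplus n}$ — forces $\tfrac{\partial\nu}{\partial x_k}=0$ for every $k$. But any element of positive degree in $A$ is recovered from its partial derivatives via $\nu=\sum_k\tfrac{\partial\nu}{\partial x_k}\,x_k$, and $\O(\L_n)=L_n$ contains no constants, so $\nu=0$, i.e.\ $\mu_1=\mu_2$.

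The one step with content is the normal-form computation in the middle: recognizing $[\Pi,-]|_{\O(\L_n)}$ as ``differentiate $\nu$, then multiply by the matrix of $\Pi$''. This is an elaboration of the bookkeeping already carried out for Lemmas \ref{lem:comp} and \ref{lem:emb} — the identification of degree-one elements of $L(x,\partial)$ with $A^{\oplus n}$, and the fact that derivations of a free Lie algebra are free on the generators — so I expect it to be routine rather than a genuine obstacle; everything else (non-degeneracy killing the partial derivatives, and the partial derivatives detecting nonzero Lie elements) is formal. If preferred, the same argument can be run in Kontsevich's associative model, where it becomes a uniqueness statement for moment maps of free double Poisson brackets.
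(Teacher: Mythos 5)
Your argument is correct and is exactly the paper's proof, which is stated in one line as ``rewrite the moment map condition as $(d\mu)^\Pi=\rho$ and use non-degeneracy'': your normal-form computation $[\Pi,\nu]=2\bigl(\tfrac{\partial\nu}{\partial x_k}\bigr)_k\cdot(\Pi_{ij})$ is precisely that rewriting in the matrix language of Lemma \ref{lem:comp}, and the final recovery $\nu=\sum_k\tfrac{\partial\nu}{\partial x_k}x_k$ is the implicit last step. No discrepancies; you have simply spelled out the bookkeeping the paper suppresses.
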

\begin{proof}
This can be seen by rewriting the moment map condition as $d\mu^\Pi = \rho$ and using non-degeneracy of $\Pi$.
\end{proof}

The next theorem is the main result of this note, showing that Poisson structures are essentially uniquely determined by their moment maps.
Let $\Pi \in \vf^2$ be non-degenerate Poisson with moment map $\mu \in \L_n$. Let $\Pi_0$ be the constant terms of $\Pi$, in particular $\Pi_0$ defines a pairing on an n-dimensional vector space $V$. Let $Z$ denote the kernel of this pairing. One checks that the degree 2 part of $\mu$, denoted by $\mu_2$, endows $V/Z$ with a linear symplectic structure.
\begin{Thm}
\label{thm:main}
Given $\Pi \in \vf^2$ a non-degenerate Poisson structure with moment map $\mu \in \L_n$. Then we have the following.
\begin{itemize}
    \item[i)] All elements in $\mathcal{P}$ admit a unique moment map.
    \item[ii)] Assigning the moment map to a given Poisson structure in $\mathcal{P}$ defines an injective map $\mathcal{P} \to L_n$, compatible with the transitive $\Exp(\Omega^1_\Pi)$-action.
    \item[iii)] The image of this map is $\{\phi.\mu\ \ | \ \phi \in \Exp(\Omega^1_\Pi) \} = \{ \tilde{\mu} \in \mu + L_n^{\geq 2}  \ | \  \tilde{\mu}_2 \text{ is non-degenerate on $V/Z$} \}$
\end{itemize}
\end{Thm}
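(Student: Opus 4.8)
The plan is to reduce parts i) and ii) to the transitivity of the $\Exp(\Omega^1_\Pi)$-action on $\mathcal{P}$ together with the naturality of $\rho$, and to prove injectivity and the description of the image by the same non-degeneracy computation that underlies the uniqueness lemma.

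For i), uniqueness is the preceding lemma, since every element of $\mathcal{P}$ is a non-degenerate Poisson structure. For existence, recall that $\Exp(\Omega^1_\Pi)$ acts on $\mathcal{P}$ through automorphisms of the free Lie algebra $\O(\L_n)$, that these automorphisms act compatibly on $\vf^\bullet$ and on $\O(T^*[1]\L_n)$ with respect to the Schouten bracket, and that $\rho=\sum_i[x_i,\partial_i]$ is independent of the choice of coordinates and hence fixed by every such automorphism. Thus, if $\mu$ is a moment map for $\Pi$ and $\phi\in\Exp(\Omega^1_\Pi)$, then $[\phi.\Pi,\phi.\mu]=\phi.[\Pi,\mu]=\phi.\rho=\rho$, so $\phi.\mu$ is a moment map for $\phi.\Pi$. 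By transitivity every element of $\mathcal{P}$ equals $\phi.\Pi$ for some $\phi$, hence has a moment map; this also gives the equivariance statement in ii), since the moment map of $\phi.\Pi$ is then $\phi.\mu$.

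For injectivity in ii) I would argue directly, mirroring the uniqueness lemma. Identify $\mathcal{P}$ with $\{\omega\in\Omega^{2,cl}\mid(1-\Pi_0\omega_0)\text{ invertible}\}$ via $\omega\mapsto\Pi^\omega=(1-\Pi\omega)^{-1}\Pi$, and recall that the moment map condition $[\Pi',\mu']=\rho$ is equivalent to $(d\mu')^{\Pi'}=\rho$. In the matrix description of Lemma~\ref{lem:comp} the latter reads $\Pi'\cdot\bigl(\text{components of }d\mu'\bigr)=(x_i)_i$, because $\rho=\sum_i\ad_{x_i}(\partial_i)$ has components $x_i$. Writing $\Pi^\omega=(1-\Pi\omega)^{-1}\Pi$, multiplying through on the left by $(1-\Pi\omega)$, and subtracting the corresponding identity for $\Pi$ (whose moment map is $\mu$) yields
\begin{align*}
\Pi\cdot\bigl(\text{comp } d(\mu^\omega-\mu)+\omega\cdot(x_i)_i\bigr)=0 .
\end{align*}
Since $\Pi$ is non-degenerate, i.e. not a zero-divisor, this forces $\text{comp } d(\mu^\omega-\mu)=-\omega\cdot(x_i)_i$. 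The assignment $\omega\mapsto\omega\cdot(x_i)_i$, that is $(\omega_{ij})\mapsto\bigl(\sum_j\omega_{ij}{*}x_j\bigr)_i$, is injective on matrices over the free associative algebra: if $\sum_j a_jx_j=0$ then every $a_j=0$, as one sees by inspecting the last letter of each monomial. Hence $\mu^\omega$ determines $\omega$ and therefore $\Pi^\omega$, so $\mathcal{P}\to L_n$ is injective.

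For iii), the first equality is immediate from i) and the equivariance in ii) together with transitivity: the image is exactly the $\Exp(\Omega^1_\Pi)$-orbit of $\mu$. For the second equality, the inclusion ``$\subseteq$'' is read off from the identity $\text{comp } d(\mu^\omega-\mu)=-\omega\cdot(x_i)_i$: its positive-degree content shows $\mu^\omega\in\mu+L_n^{\geq2}$, while its constant part together with the invertibility of $(1-\Pi_0\omega_0)$ shows that $\mu^\omega_2$ is still non-degenerate on $V/Z$ (one also checks that the reductive quotient $\GL_{2k}$ of $\Exp(\Omega^1_\Pi)$ acts on the quadratic part through the linear symplectic group of $V/Z$, preserving non-degeneracy, and fixes the linear part of $\mu$). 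For ``$\supseteq$'', given $\tilde\mu\in\mu+L_n^{\geq2}$ with $\tilde\mu_2$ non-degenerate, first use a $\GL_{2k}$-element to arrange $\tilde\mu_2=\mu_2$, so that $\tilde\mu-\mu\in L_n^{\geq3}$, and then run the standard Moser-type argument in the acyclic pronilpotent dgLA $\Omega^\bullet$ — exactly as in the proof that $\mathcal{P}\cong\{\omega\in\Omega^{2,cl}\mid(1-\Pi_0\omega_0)\text{ invertible}\}$ — to build a gauge transformation carrying $\mu$ to $\tilde\mu$, the relevant obstruction vanishing by acyclicity. The computational core is the cancellation of $\Pi$ in the injectivity step, which is routine given non-degeneracy; the genuinely delicate point is ``$\supseteq$'' in iii), where one must verify that the $2$-form reconstructed from a candidate $\tilde\mu$ via the formula of ii) is actually closed — equivalently that the associated bivector is Poisson — which is precisely what the Moser/acyclicity argument secures, and whose coordination with the reductive $\GL_{2k}$-factor requires some care.
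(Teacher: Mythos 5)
Parts i) and ii) of your argument are essentially the paper's. Existence via transitivity plus coordinate-independence of $\rho$, and uniqueness from the preceding lemma, is exactly what is done there. Your matrix manipulation for injectivity produces precisely the paper's key equation $\Ad_{\tilde\mu_i}=-\sum_j\Ad_{x_j}\Ad_{\omega_{ji}}$ (in the paper it is derived as $d\tilde\mu=\iota_\rho\omega$ using the explicit formula for $\iota_\rho$ on $2$-forms), and your ``inspect the last letter'' observation is the same factoring argument in the free associative algebra. The inclusion ``$\subseteq$'' in iii) likewise matches the paper's constant-term computation.

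The divergence, and the real gap, is in ``$\supseteq$'' of iii). The paper does not run a Moser argument here: it constructs the required closed $2$-form explicitly, writing $\tilde\mu-\mu=\sum_k[\mu_1^k,\mu_2^k]$ (possible because the difference lies in $L_n^{\geq 2}$, i.e.\ in the commutator ideal) and setting $\omega=\sum_k\langle d\mu_1^k,d\mu_2^k\rangle$, which is manifestly closed and is checked to satisfy $\iota_{\rho^t}\omega=-\langle t,d\tilde\mu\rangle$ (\v{S}evera's central-extension interpretation). You instead appeal to ``the same Moser/acyclicity argument as for $\mathcal{P}$'', but that argument lives in the acyclic dgLA $\Omega^\bullet$ and controls the orbit of $\Pi$ among bivector fields, not the orbit of $\mu$ inside $L_n$; acyclicity of $\Omega^\bullet$ is simply not the relevant input for moving moment maps, so the assertion that it ``secures'' the closedness of the reconstructed form is not a proof. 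To make your route work you need two facts you do not state: first, that the infinitesimal action of a $1$-form $\alpha=\langle\alpha_i,dx_i\rangle$ on the current moment map equals $\iota_\alpha\rho=\pm\sum_i[x_i,\alpha_i]$, hence is independent of which element of $\mathcal{P}$ the flow currently sits at (this is what makes the Moser equation solvable with $\alpha_t$ constant in $t$); and second, that $\alpha\mapsto\sum_i[x_i,\alpha_i]$ surjects onto $L_n^{\geq 2}$. Note also that your preliminary $\GL_{2k}$-reduction only normalizes the image of $\tilde\mu_2$ in $V/Z$; a degree-$2$ discrepancy involving $Z$ can survive, so you cannot assume $\tilde\mu-\mu\in L_n^{\geq 3}$, and the flow must be run with a possibly linear $\alpha$ --- which is exactly where the non-degeneracy hypothesis on $\tilde\mu_2$ is consumed and where you must check that the invertibility condition defining $\mathcal{P}$ is preserved.
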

The theorem gives us a well-defined map
\begin{align*}
\{\phi.\mu\ \ | \ \phi \in \Exp(\Omega^1_\Pi) \} &\longrightarrow \Omega^{2, cl} \\
\eta &\longmapsto \omega^\eta,
\end{align*}
with the property that $\Pi^{\omega^\eta} = (1 - \Pi \omega^\eta)^{-1}\Pi$ is Poisson with moment map $\eta$, which is a bijection in the case where $\Pi$ has no constant terms.

\begin{proof}
\begin{itemize}
    \item [i)] Note that since each element in $\mathcal{P}$ is of the form $\phi. \Pi$ for some $\phi \in \Aut_n$ there is at least one corresponding moment map, namely $\phi.\mu$, which is unique by the previous lemma.
    \item[ii)] For a $\mu + \tilde{\mu}$ and a 2-form $\omega$, one can rewrite the equation $[\Pi^\omega, \mu + \tilde{\mu}] = \rho$ as $d\tilde{\mu} = \iota_\rho \omega$, or 
    \begin{align}
    \label{eq:mm}
    \Ad_{\tilde{\mu}_i} = - \sum_j \Ad_{x_j} \Ad_{\omega_{ji}},
    \end{align}
    where $d\tilde{\mu} = \sum \Ad_{\tilde{\mu}_i} dx_i$, $\omega = \sum \left< dx_i, \Ad_{\omega_{ij}} dx_j \right>$. This shows injectivity, namely the equation uniquely determines $\omega_{ij}$. The compatibility with the $\Exp(\Omega^1_\Pi)$-action is clear.
    \item[iii)]
    For surjectivity one can check directly that the form $\omega$ defined by equation \ref{eq:mm} is indeed closed and has the desired property. A more geometric construction suggested by $\check{S}$evera goes as follows.  Let $M$ denote the central extension of $(T[1]\L_n)^{\text{deg } \leq 1}$ by $\Omega^2$, that is $[\alpha, \beta] = \left<\alpha, \beta \right>$ for 1-forms $\alpha$, $\beta$. One checks that elements of the form $ \tilde{\mu}+ d \tilde{\mu} + \omega$ with $\omega$ closed form a subalgebra. Thus $\omega$ can be constructed by taking the degree 2 part of $\tilde{\mu}(x_1+dx_1, \dots, x_n + dx_n)$. To write down this element one needs to write $\tilde{\mu} = \sum [\mu_1^k, \mu_2^k]$ to determine
    \begin{align*}
        \omega = \left< d\mu_1^k, d\mu_2^k \right>.
    \end{align*}
    One computes
    \begin{align*}
        \iota_{\rho^t} \omega &= \left< \iota_{\rho^t} d\mu_1^k, d\mu_2^k \right> - \left< d\mu_1^k, \iota_{\rho^t}  d\mu_2^k \right>\\
        &= \left< [\mu_1^k,t], d\mu_2^k \right> - \left< d\mu_1^k, [\mu_2^k, t] \right> \\
        &= - \left< t, [\mu_1^k, d\mu_2^k] + [d\mu_1^k, \mu_2^k] \right> \\
        &= - \left< t, d \tilde{\mu} \right>.
    \end{align*}
    By computing the constant terms one checks that the condition on the mompent map is equivalent to $\omega$ being non-degenerate.
\end{itemize}
\end{proof}

\begin{Rem}[Ass]
There is also a version of the theorem in the associative case. Here the coefficients $\omega_{ij}$ lie in $A \otimes A^{op}$. Similar as in lemma \ref{lem:emb}, one can choose an automorphism of $A \otimes A$ such that $\Ad_{x_j}$ is represented by left multiplying on the left factor. Then one checks that for \ref{eq:mm} to have a solution, necessarily $\tilde{\mu} \in [A,A]$, which is also sufficient for the rest of the proof to go through.
\end{Rem}

\begin{Exa}
The theorem states, that a moment map uniquely determines a Poisson bracket in a given gauge class. However, there exist distinct Poisson brackets with the same moment map. To construct an example, consider $\Phi \in \Aut(L_3)$ given by $x_1 \mapsto x_1 + [x_2,x_3], x_2 \mapsto x_2 - [x_2, x_3], x_3 \mapsto x_3$. Clearly, $\Phi(x_1 + x_2 + x_3) = x_1 + x_2 + x_3$, however, one easily checks that $\Phi$ does not preserve $\left<\partial_i, [x_i, \partial_i] \right>$.
\end{Exa}

\begin{Rem}
In the symplectic case, the theorem is essentially equivalent to the result of Massuyeau-Turaev about non-degenerate Fox pairings (cf. \cite{MT}), which strengthens an earlier result of Kawazumi-Kuno (cf. \cite{KK}).
\end{Rem}

\renewcommand{\emptycomment}[1]{}
\emptycomment{
\subsection{Poisson structures}
Let $\mathcal{P} = \{ P \in \vf^2 | \ [P,P] = 0 \}$ define the locus of Poisson structures. One defines $\check{S}=\{(\Pi, \sigma) \in \mathcal{P} \times \Omega^1 | (1 - \Pi d\sigma) \text{ is invertible} \}$ where we identify bivector fields and one forms with skew-adjoint matrices with entries in the free associative algebra. Then $\check{S} \rightrightarrows \mathcal{P}$ acquires the structure of a groupoid with source map $s(\Pi, \omega) = \Pi$ and target $t(\Pi, \sigma) = (1 - \Pi d\sigma)^{-1}\Pi$. This is essentially the action groupoid of the abelian group $\Omega^1$.

Another description of $\check{S}$ is given by the following
\begin{Lem}
The following map defines an isomorphism of groupoids.
\begin{align*}
\check{S} & \longrightarrow \{ (\Pi, \omega) \in \mathcal{P} \times \Omega^2 | \ d\sigma + \tfrac{1}{2}[\sigma,\sigma]_\Pi = 0, (1+ \Pi \omega) \text{ is invertible} \} \\
(\Pi, \omega)&\longmapsto (\Pi, (1 - d\omega \Pi)^{-1}d\omega )
\end{align*}
where the target map is given by $t(\Pi, \sigma) = \Pi + \Pi\sigma\Pi$.
\end{Lem}
\begin{proof}
The formula is equivalent to $(1 - \Pi d\omega)(1+\Pi \sigma) = 1$, so the invertibility follows immediately.
Let $\omega \in \Omega^1$ using injectivity of $(-)^\Pi$ it is enough to show that $\Pi^t = (1- t d\omega \Pi)^{-1}\Pi$ are Poisson brackets, which follows from $\dfrac{d}{dt}\Pi^t = \Pi^t d\omega \Pi^t$.
The same argument show that $\sigma(1+\Pi \sigma)^{-1}$  is closed and hence exact.
\end{proof}

The space of one forms $\Omega^1$ defines a set of vector fields on $\mathcal{P}$ by contraction with $\Pi \in \mathcal{P}$. Those vector fields shall be called infinitesimal gauge transformations. Its integrating groupoid $\mathcal{G} \rightrightarrows \mathcal{P}$ of gauge transformations can be seen to be isomorphic to $\mathcal{P} \times \Omega^1$ with the following structure maps
\begin{align*}
s(\Pi, \alpha) &= \Pi \\
t(\Pi, \alpha) &= (e^{\iota_\alpha \Pi})_{*} \Pi  \\
( (e^{\iota_{\alpha_1} \Pi})_{*} \Pi , \alpha_2) \cdot (\Pi, \alpha_1) &= (\Pi, \operatorname{bch}_\Pi (\alpha_1 , (e^{-{\iota_{\alpha_1}} \Pi})_{*} \alpha_2) )
\end{align*}
where $\operatorname{bch}_\Pi$ is the multiplication given by exponentiating the Lie algebra structure on $\Omega^1$ induced by $\Pi$. There is a natural homomorphism to the action groupoid $\mathcal{P} \times \Aut(\Lie_n)$.

}

\subsection{Kirillov-Kostant-Souriau Poisson structure}
In this section the results are spelled out for the case of the Kirillov-Kostant-Souriau bivector field given by $\Pi := \tfrac{1}{2} \Sigma \left< x_i, [\partial_i, \partial_i] \right>$. This induces the map
$$
T^*[1]\L_n \overset{(-)^\Pi}{\longrightarrow} T[1]\L_n \, , \, dx_i = [x_i, \partial_i], \ x_i = x_i.
$$
The space of Casimir functions is determined by the following
\begin{Lem}
The kernel of the map $(-)^\Pi : \Omega(\L_n) \rightarrow \vf(\L_n)$ is linearly spanned by $\left< x_i, dx_i \right>$.
\end{Lem}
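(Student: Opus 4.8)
The plan is to compute the map $(-)^\Pi$ explicitly on the model of $\Omega(\L_n)$ provided by Lemma~\ref{lem:comp}, and then to identify its kernel by a degree/filtration argument. Recall that for the KKS bivector $\Pi = \tfrac12\Sigma\left<x_i,[\partial_i,\partial_i]\right>$ one has $(dx_i)^\Pi = [\Pi,x_i] = [x_i,\partial_i]$ and $x_i^\Pi = x_i$, so $(-)^\Pi$ is the substitution $dx_i \mapsto [x_i,\partial_i]$ at the level of $\O(T[1]\L_n) \to \O(T^*[1]\L_n)$, descending to forms. Under the isomorphism $L_n^{\times n} \xrightarrow{\sim} \Omega^1(\L_n)$, $(\alpha_i)\mapsto \Sigma\left<dx_i,\alpha_i\right>$, we therefore need to understand when $\Sigma\left<[x_i,\partial_i],\alpha_i\right>$ vanishes in $\vf(\L_n) = F(T^*[1]\L_n)$, i.e. modulo the cyclic and Leibniz relations defining $F$.

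First I would verify that $\left<x_i,dx_i\right>$ is indeed in the kernel: by invariance of the pairing $\left<[x_i,\partial_i],x_i\right> = \left<x_i,[\partial_i,x_i]\right> = -\left<x_i,[x_i,\partial_i]\right>$, and since $|x_i| = 0$ the graded-symmetry relation in $F$ forces $\left<[x_i,\partial_i],x_i\right> = \left<x_i,[x_i,\partial_i]\right>$ as well, so $2\left<[x_i,\partial_i],x_i\right> = 0$, hence each $\left<x_i,dx_i\right>^\Pi = 0$. (This is the same computation that appears in the Example on trivial operations of the form $\left<\alpha_i,\sum_j[x_j,\partial_j]\right>$, specialized to $\alpha_i = x_i$.) Then, to show these span the kernel, I would use the faithfulness Lemma~\ref{lem:faith}: it suffices to analyze the specialization of $(-)^\Pi$ to $\sll(N)$, where $\Omega^1(\g^{\times n}) \to \vf^1(\g^{\times n})$ is the ordinary contraction of one-forms with the KKS bivector, whose kernel is the annihilator of the symplectic foliation. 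On $\sll(N)^n$ this kernel, among universal one-forms, is the span of the $d\left<\tfrac12 x_i^2\right>$ (the differentials of the Casimirs $x_i \mapsto \tr(x_i^2)$), which pull back precisely from $\left<x_i,dx_i\right>$.

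Alternatively — and this is probably the cleaner route to write — I would argue directly on $L_n$. Using Lemma~\ref{lem:comp} for $n$-forms is overkill; instead, decompose an arbitrary $\alpha = \Sigma\left<dx_i,\alpha_i\right>$ by total polynomial degree, write $\alpha_i = \sum_{k} \alpha_i^{(k)}$ with $\alpha_i^{(k)}$ homogeneous of degree $k$ in the $x$'s, and observe that $(-)^\Pi$ preserves the filtration by (polynomial degree $+$ form degree) since $dx_i \mapsto [x_i,\partial_i]$ raises polynomial degree by one and lowers form degree by one. On the associated graded, $[x_i,\partial_i]$ acts as the free derivation $x_i\partial_i$ (no lower-order corrections from the bracket $[\cdot,\cdot]$ acting back on the $x$'s), so the kernel of the associated graded map is controlled by the kernel of the substitution $dx_i\mapsto [x_i,\partial_i]$ into $F(T^*[1]\L_n)$ viewed as the free object. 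Embedding into the associative world via $U(L_n)$ (as in the Ass-remarks), $\left<[x_i,\partial_i],\alpha_i\right>$ becomes (up to the cyclic relation) a sum of words of the form $x_i\partial_i\alpha_i - \partial_i x_i\alpha_i$ modulo cyclic permutation, and a word-combinatorics computation shows this vanishes only when $\alpha_i$ is a scalar multiple of $x_i$ (plus contributions that cancel against neighboring terms $j\neq i$, which one rules out by looking at words beginning with a fixed generator $\partial_j$).

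\textbf{Main obstacle.} The genuine difficulty is the last combinatorial step: showing that no \emph{cancellation between different indices} $i\neq j$ can occur, so that the kernel is exactly the diagonal span of the $\left<x_i,dx_i\right>$ rather than something larger. The filtration argument reduces this to a statement about the free (associative or Lie) model, where I expect a clean argument by examining the ``leading letter'' of each word, in the same spirit as the linear-independence computation at the end of the proof of Lemma~\ref{lem:faith}; but making the bookkeeping of the cyclic relation precise is where the real work lies. If that proves unwieldy, the faithfulness route via $\sll(N)$ sidesteps it entirely, at the cost of invoking the (standard) description of the kernel of contraction with the KKS bivector on $\mathfrak{g}$ together with a check that the only universal one-forms annihilating all symplectic leaves for all $N$ are the expected ones.
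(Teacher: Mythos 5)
The paper states this lemma without proof, so there is no in-text argument to compare against; judging your proposal on its own terms, it points in a workable direction but has concrete gaps. First, the lemma concerns all of $\Omega(\L_n)$ and you only ever discuss $\Omega^1$: you need to note that $(-)^\Pi$ is the identity on $\Omega^0$ and injective on $\Omega^{\geq 2}$ by Lemma \ref{lem:emb}, since $\Pi_{\mathrm{KKS}}$ is non-degenerate, so that the kernel is concentrated in form degree one. Second, the ``main obstacle'' you isolate --- possible cancellation between different indices $i\neq j$ --- is not actually there. By invariance, $\left<[x_i,\partial_i],\alpha_i\right> = -\left<\partial_i,[x_i,\alpha_i]\right>$, so $\alpha^\Pi$ corresponds under the componentwise identification $\vf^1\cong\Der(L_n)$ (the $T^*[1]$-analogue of Lemma \ref{lem:comp}) to the derivation $x_i\mapsto -[x_i,\alpha_i]$; hence $\alpha^\Pi=0$ iff $[x_i,\alpha_i]=0$ for each $i$ separately, and no interaction between indices can occur. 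The genuine crux, which you gesture at with ``a word-combinatorics computation'' but do not carry out, is that the centralizer of the generator $x_i$ in the completed free Lie algebra is $\kk x_i$; this follows from the standard fact that the centralizer of $x_i$ in the free associative algebra is $\kk[[x_i]]$, whose only primitive elements are multiples of $x_i$. With these two repairs your direct route closes.

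Your alternative route via Lemma \ref{lem:faith} and $\sll(N)$ is shakier as written: the pointwise kernel of contraction with the KKS bivector at $x\in\sll(N)$ is the full centralizer $\ker(\ad_x)$, which for regular $x$ is spanned by all powers of $x$, so the differentials of the higher Casimirs $\tr(x_i^k)$, $k\geq 3$, also annihilate the symplectic leaves --- your claim that the kernel is spanned by the $d\,\tr(x_i^2)$ alone is not correct at that level. Excluding the higher Casimirs requires showing that no Lie series other than a multiple of $x_i$ specializes into $\ker(\ad_{x_i})$ for all $N$, which is again the centralizer computation, now obscured by the specialization; so this route does not sidestep the difficulty. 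A small further quibble: your parenthetical comparing $\left<x_i,dx_i\right>^\Pi=0$ to the Example on bivector fields of the form $\left<\alpha_i,\sum_j[x_j,\partial_j]\right>$ conflates two different phenomena --- those are nonzero bivector fields inducing trivial operations, whereas here $\left<x_i,[x_i,\partial_i]\right>=\left<[x_i,x_i],\partial_i\right>=0$ on the nose.
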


Identifying vectorfields with derivations we get a map
\begin{align*}
\Omega^1(\L_n) & \longrightarrow  \Der(\L_n) \\
\left< \alpha_i, dx_i \right> &\longmapsto (x_i \mapsto [x_i, \alpha_i]).
\end{align*}

Let us denote the Lie algebra $\Omega^1(\L_n)$ by $\tder_n$. 

The bracket on $\tder_n$ can be computed as follows. Let $\alpha = \left< \alpha_i, dx_i \right>, \beta = \left< \beta_i, dx_i \right> \in \Omega^1(\L_n)$ then
$$
[ \alpha, \beta  ] = \left< \alpha^\sharp (\beta_i) - \beta^\sharp( \alpha_i) + [\alpha_i, \beta_i] , dx_i \right>.
$$

\begin{Rem}
Our definition of $\tder_n$ differs by an $n$-dimensional abelian direct summand from the one in \cite{AT}. The same remark applies to $\TAut_n$.
\end{Rem}

Let us denote the integrating Lie group of $\tder_n$ by $\TAut_n$. The above map exponentiates to
\begin{eqnarray*}
Exp(\Omega^1(\L_n)) & \longrightarrow & \Diff(\L_n) = \Aut(L_n) \\
e^{\alpha_i dx_i} &\longmapsto& (x_i \mapsto e^{A_i} x_i e^{-A_i}) \\
&& \text{where } A_i =  \left( \frac{e^{\alpha^\sharp} - 1}{\alpha^\sharp} \right) (\alpha_i).
\end{eqnarray*}
Using the $e^{A_i}$ as components of a map, $\TAut_n$ can be thought of as $\Map(\L_n, \Exp(\L_n))$.

The closed one forms $\esder \cong \Omega^0(\L_n)$ form a Lie subalgebra, whose Lie group $\Ham(\L_n)$ can be characterized by the following lemma that appears in \cite{Dr},
\begin{Lem}[Drinfeld]
\label{drinfeld}
Let $\phi \in \eTAut$. Then
$$ 
\phi \in \Ham(\L_n)  \iff \phi(\Sigma x_i) = \Sigma x_i
$$
\end{Lem}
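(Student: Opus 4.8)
The plan is to pass from the group statement to an infinitesimal one, prove the infinitesimal statement in two inclusions, and isolate the single inclusion that carries real content. First I would observe that, since $\Pi = \tfrac12\Sigma\left<x_i,[\partial_i,\partial_i]\right>$ has vanishing constant term, the Lie algebra $\Omega^1(\L_n) = \tder_n$ is pronilpotent (its bracket strictly raises the polynomial-degree filtration), so $\TAut_n = \Exp(\Omega^1(\L_n))$ is prounipotent, $\exp\colon\tder_n\to\TAut_n$ is a bijection, and $\Ham(\L_n) = \exp(\esder)$. The $\TAut_n$-action on $\O(\L_n) = L_n$ is pro-algebraic, so the stabiliser of $\Sigma x_i$ is a closed prounipotent subgroup with Lie algebra $\{u\in\Omega^1(\L_n)\ :\ u(\Sigma x_i)=0\}$, and $\exp(u)(\Sigma x_i)=\Sigma x_i$ holds exactly when $u(\Sigma x_i)=0$ (expand the exponential and compare the lowest filtration degree). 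Hence the lemma reduces to the equality of Lie subalgebras $\esder = \{u\in\Omega^1(\L_n) : u(\Sigma x_i)=0\}$ inside $\tder_n$.

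The inclusion ``$\subseteq$'' is the easy one. A closed $1$-form is $df$ for $f\in\Omega^0(\L_n)$, and it acts as the derivation $x_i\mapsto[x_i,\tfrac{\partial f}{\partial x_i}]$, so $df(\Sigma x_i)=\Sigma_i[x_i,\tfrac{\partial f}{\partial x_i}]$. Pairing with a placeholder $t$ gives $\left<t,df(\Sigma x_i)\right>=\Sigma_i\left<[t,x_i],\tfrac{\partial f}{\partial x_i}\right>=\iota_{\rho^t}(df)$, and the right-hand side vanishes because $\rho^t$ generates the diagonal symmetry under which every function in $F(\L_n)$ is invariant (concretely, for $f=\left<a,b\right>$ this is invariance of the universal pairing together with the Jacobi identity). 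Since $t$ is arbitrary and the pairing is faithful (Lemma \ref{lem:faith}), $df(\Sigma x_i)=0$.

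The inclusion ``$\supseteq$'' is the crux. Given $u=\Sigma_i\left<dx_i,\alpha_i\right>$ with $\Sigma_i[x_i,\alpha_i]=0$, it suffices by acyclicity of $\Omega^\bullet(\L_n)$ to show that $u$ is closed; then $u = d(hu)$ with the explicit primitive $hu=\tfrac1m\Sigma_i\left<x_i,\alpha_i\right>$ ($m$ the polynomial degree of $u$) furnished by the Euler-vectorfield homotopy, and $hu\in\Omega^0(\L_n)$. Running the computation above in reverse shows that $P\colon u\mapsto u(\Sigma x_i)$ factors through $d$: one has $\iota_{\rho^t}(u)=\left<t,P(u)\right>$, and $d\iota_{\rho^t}+\iota_{\rho^t}d=0$ on universal forms (the Lie derivative along $\rho^t$ kills invariant forms), whence $\iota_{\rho^t}(du)=-\left<t,d\,P(u)\right>$; so $P(u)=0$ forces $\iota_{\rho^t}(du)=0$. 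The remaining point — which I expect to be the main obstacle — is that $\iota_{\rho^t}\colon\Omega^2(\L_n)\to\Omega^1(\L_n\times\L(t))$ is injective, i.e.\ no nonzero universal $2$-form is annihilated by contraction with the diagonal action vector field. I would attack this through the matrix model of Lemma \ref{lem:comp} together with the fact that $\Sigma_i[x_i,-]\colon L_n^{\times n}\to L_n$ is surjective in polynomial degrees $\ge 2$, so that an Euler-characteristic count in each degree turns the claim into the free-Lie-algebra identity $n\dim(L_n)_{m-1}-\dim(L_n)_m=\dim F(\L_n)_m$; alternatively, one can specialise to $\sll(N)$ via Lemma \ref{lem:faith} and use vanishing of the first Chevalley--Eilenberg/Poisson cohomology for semisimple $\g$, or run Drinfeld's original argument with the exponential formula $x_i\mapsto e^{A_i}x_ie^{-A_i}$, or observe that by Theorem \ref{thm:main}(ii) the moment map determines the Poisson structure within its $\TAut_n$-orbit, so that $\mathrm{Stab}_{\TAut_n}(\Sigma x_i)=\mathrm{Stab}_{\TAut_n}(\Pi)$ and the statement becomes that a tangential Poisson vector field for $\Pi$ is Hamiltonian. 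Granting the injectivity, $P(u)=0$ gives $du=0$, hence $u=df\in\esder$, which establishes the remaining inclusion and, by the first paragraph, the lemma.
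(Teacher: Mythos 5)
Your reduction of the group statement to the Lie algebra statement is sound (the filtration argument for pronilpotency of $\tder_n$ is exactly why the paper can work infinitesimally), and your easy inclusion $\esder\subseteq\ker P$ is correct. But the decisive implication --- $u(\Sigma x_i)=0\Rightarrow du=0$ --- is not actually established in your write-up. You correctly reduce it, via the Cartan formula and $L_{\rho^t}=0$ on universal forms, to the injectivity of $\iota_{\rho^t}$ on $2$-forms, you correctly flag this as ``the main obstacle,'' and then you offer four candidate strategies (an Euler-characteristic count, specialization to $\sll(N)$, Drinfeld's exponential computation, or quoting Theorem \ref{thm:main}(ii)) without carrying out any of them. ``Granting the injectivity'' is precisely the point at which the proof stops being a proof. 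The gap is genuinely fillable --- in the matrix model of Lemma \ref{lem:comp} the condition $\iota_\rho(du)=0$ reads $\sum_j \Ad_{x_j}\Ad_{\omega_{ji}}=0$, i.e.\ $\sum_j x_j\omega_{ji}=0$ in the free associative algebra, and the unique decomposition of a series by its first letter forces each $\omega_{ji}=0$ --- but as written you have replaced the crux by a disjunction of unexecuted plans. (Your fourth option is in fact the paper's headline justification, ``follows from Theorem \ref{thm:main},'' though making it precise requires combining \ref{thm:main}(ii) with the identification of the stabilizer Lie algebra with $\Omega^0$.)

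For comparison, the paper's direct proof avoids $\iota_\rho$, acyclicity and any injectivity lemma altogether. It rewrites $\sum_i[x_i,\alpha_i]=0$ in $U(L_n)=\kk\left<x_1,\dots,x_n\right>$ as $\sum_i x_i\alpha_i=\sum_j\alpha_jx_j$, expands each $\alpha_i$ both by its last letter, $\alpha_i=\sum_j\frac{\partial\alpha_i}{\partial x_j}x_j$, and by its first letter, $\alpha_i=\sum_j x_j\bigl(\frac{\partial\alpha_i}{\partial x_j}\bigr)^*$, and compares the two resulting decompositions of $\sum_{i,j}x_i(\cdots)x_j$ to conclude $\frac{\partial\alpha_i}{\partial x_j}=\bigl(\frac{\partial\alpha_j}{\partial x_i}\bigr)^*$ for all $i,j$ --- which by the matrix description of $\Omega^2$ in Lemma \ref{lem:comp} is literally the statement $d\alpha=0$, and the argument runs as a chain of equivalences, so both inclusions come out at once. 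If you want to salvage your route, commit to the first-letter argument above for the injectivity of $\iota_{\rho^t}$ on $\Omega^2$; the dimension-count and $\sll(N)$ alternatives would each need substantially more work than the two lines they replace.
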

\begin{proof}
This follows from theorem \ref{thm:main}.
For convenience we give a direct proof.
It is enough to show that for $\alpha = \left< \alpha_i dx_i \right >$, $\alpha(\Sigma x_i) = 0$ implies that $\alpha$ is closed. We are going to use the fact that any Lie series $\alpha$ can be written as
$$
\alpha = \frac{\partial \alpha}{\partial x_i} x_i = x_i \left(\frac{\partial \alpha}{\partial x_i} \right)^* \in \kk\left<x_1, \dotsc, x_n \right>,
$$
where $*$ denotes the antipode in $\kk\left<x_1, \dotsc, x_n \right>$.
Using this we get
\begin{eqnarray*}
\alpha(\Sigma x_i) = 0 &\iff& [\alpha_i, x_i] = 0 \\
&\iff& x_i \alpha_i = \alpha_j x_j \\
&\iff& x_i \frac{\partial \alpha_i}{\partial x_j} x_j = x_i \left( \frac{\partial \alpha_j}{\partial x_i} \right)^* x_j \\
&\iff& \frac{\partial \alpha_i}{\partial x_j} = \left( \frac{\partial \alpha_j}{\partial x_i} \right)^*  \ \forall i,j \\
&\iff& d\alpha = 0
\end{eqnarray*}

\end{proof}

\begin{Lem}
\label{transitivity}
$\TAut_n$ acts transitively on $\Sigma x_i + L_n^{\geq 2}$.
\end{Lem}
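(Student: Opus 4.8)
The plan is to run the standard prounipotent-group argument: realize an arbitrary target $\mu\in\Sigma x_i+L_n^{\geq2}$ as $\phi(\Sigma x_i)$ for some $\phi\in\TAut_n$, correcting the difference one lower-central-series degree at a time by elementary tangential automorphisms.

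The one genuinely algebraic ingredient I would isolate first is that the infinitesimal action of $\tder_n=\Omega^1(\L_n)\cong L_n^{\times n}$ at the point $\Sigma x_i$, namely $(\alpha_i)\mapsto\Sigma_i[x_i,\alpha_i]$, surjects onto $L_n^{\geq2}$ (the tangent space of the affine subspace in question). The inclusion $\subseteq L_n^{\geq2}$ is immediate since $L_n=L_n^{\geq1}$; for the converse one shows $L_n^{\geq2}=[L_n,L_n]=\Sigma_i[x_i,L_n]$ by an easy induction on bracket length, the key move being $[[a,b],c]=[a,[b,c]]-[b,[a,c]]$, which replaces any $[w,L_n]$ with $\deg w\geq2$ by brackets with strictly smaller left factor until that factor is a single generator. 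I expect this surjectivity to be the main point; everything after it is bookkeeping.

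Granting it, I would induct on the lower-central-series filtration of $L_n$ (which is complete). Since each derivation $D_{(\alpha_i)}\colon x_j\mapsto[x_j,\alpha_j]$ strictly raises degree, $\tder_n$ is pronilpotent and $\TAut_n=\Exp(\tder_n)$ is prounipotent, so it suffices to solve $\phi(\Sigma x_i)\equiv\mu\pmod{L_n^{\geq k}}$ for all $k$ and pass to the limit. Start with $\phi_1=\id$. Given $\phi_{k-1}\in\TAut_n$ with $\phi_{k-1}(\Sigma x_i)\equiv\mu\pmod{L_n^{\geq k}}$, let $\nu_k\in L_n^{\geq2}$ be the homogeneous degree-$k$ part of $\mu-\phi_{k-1}(\Sigma x_i)$, choose $(\alpha_i)$ homogeneous of degree $k-1$ with $\Sigma_i[x_i,\alpha_i]=\nu_k$ by the previous step, and set $\phi_k=\exp(D_{(\alpha_i)})\circ\phi_{k-1}\in\TAut_n$. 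Because $D_{(\alpha_i)}$ raises degree by exactly $k-1\geq1$, a short degree count ($D_{(\alpha_i)}(\Sigma x_i)=\nu_k$, all other contributions of degree $\geq k+1$) shows $\phi_k(\Sigma x_i)\equiv\mu\pmod{L_n^{\geq k+1}}$. The factors $\exp(D_{(\alpha_i)})$ tend to $\id$ in the filtration topology, so the $\phi_k$ converge in $\TAut_n$ to a $\phi$ with $\phi(\Sigma x_i)=\mu$.

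Finally I would note that the lemma is also a special case of Theorem \ref{thm:main} for $\Pi=\Pi_\text{KKS}$: this $\Pi$ has vanishing constant part, so $V/Z=0$ and the non-degeneracy hypothesis on $\tilde\mu_2$ in part iii) is vacuous, while $\Omega^1_\Pi=\Omega^1(\L_n)=\tder_n$ (its bracket on $1$-forms being the one recorded above) and its moment map is $\Sigma x_i$; part iii) then asserts precisely that the $\TAut_n$-orbit of $\Sigma x_i$ is all of $\Sigma x_i+L_n^{\geq2}$. The direct argument is worth including because it is short and self-contained, and the only delicate point in it is the surjectivity $L_n^{\geq2}=\Sigma_i[x_i,L_n]$ together with the degree bookkeeping in the inductive step.
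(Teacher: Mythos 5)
Your argument is correct, but it is not the route the paper takes: the paper disposes of this lemma in one line by citing Theorem \ref{thm:main} (for $\Pi=\Pi_{\text{KKS}}$ the constant part vanishes, so part iii) says the orbit of the moment map $\Sigma x_i$ is all of $\Sigma x_i+L_n^{\geq 2}$), which is exactly the observation you relegate to your closing paragraph. Your main contribution is the direct, self-contained proof: the identity $L_n^{\geq 2}=\sum_i[x_i,L_n]$ (your Jacobi-identity induction on the degree of the left factor is the standard reduction to right-normed monomials and is fine), followed by the degree-by-degree correction $\phi_k=\exp(D_{(\alpha_i)})\circ\phi_{k-1}$, which converges because each $D_{(\alpha_i)}$ raises degree by $k-1\geq 1$ and $\TAut_n$ is prounipotent. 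The bookkeeping in your inductive step checks out: $D_{(\alpha_i)}(\Sigma x_i)=\nu_k$ lands in degree $k$ and every other term of $\exp(D_{(\alpha_i)})(\phi_{k-1}(\Sigma x_i))-\phi_{k-1}(\Sigma x_i)$ has degree $\geq k+1$. What each approach buys: the paper's citation is economical but makes the lemma logically downstream of the full machinery of gauge classes and moment maps (and indeed the paper uses Theorem \ref{thm:main} only as one of two proofs of the neighbouring Drinfeld lemma as well), whereas your argument is elementary, uses only the surjectivity of the infinitesimal action at $\Sigma x_i$ plus completeness of the lower central series, and would survive verbatim in the associative setting. Both are valid; yours is the longer but more transparent one.
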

\begin{proof}
Also follows from our main theorem \ref{thm:main}.
\end{proof}

\emptycomment{
\subsection{"spezialisation"}
Let now $\g$ be a fixed quadratic Lie algebra. Furthermore $\mathcal C_n$ denotes the category of $\g^n$-Hamiltonian spaces, and let $\mathcal C_n^-$ denote the category of $\g^n$-Poisson spaces. There is an obvious forgetful functor $\mathcal C \rightarrow \mathcal C^-$.

Let $\mathcal M$ denote the Maurer-Cartan elements in $\Omega(\L_n)[1]$ or equivalently the ones in $\vf(\L_n)$, namely
$$
\mathcal M := \{ \sigma \in \Omega^2(L_n) \| d \sigma + \frac{1}{2} [\sigma, \sigma] = 0 \}.
$$
Thus $\mathcal M$ consists of exactly those $\sigma \in \Omega^2(L_n)$ such that $\Pi_0 + \sigma^\sharp$ is a Poisson bracket on $\L_n$. A simpler description is given by the following
\begin{Lem}
The map
\begin{eqnarray*}
\mathcal M & \rightarrow & \Omega^{2,cl} = { \omega \ in \Omega^2 \|\  d \omega = 0} \\
\sigma & \mapsto & - \left(1 + \sigma \Pi_0 \right) \sigma
\end{eqnarray*}
is well-defined and a bijection.
\end{Lem}
\begin{proof}
\end{proof}

}

\section{Hamiltonian spaces as a {$\TAut$}-algebra}
As was shown above the groups $\TAut_n$ act transitively on $\sum x_i + L_n^{\geq 2}$. The corresponding groupoids fit together to form an operad in groupoids which we denote again by $\TAut_n$. Instead of giving the operadic compositions, a faithful (after taking a suitable limit) action on a category is constructed, from which the operadic structure can be infered.

Let $\mathcal{C}_n$ denote the category of formal $g^n$-Hamiltonian spaces, that is Poisson manifolds with a Poisson map into $g^n$ (recall that $g$ is quadratic). Using the canonical map $\mathcal{C}_n \to \mathcal{C}$, one sees that $\Fun(\mathcal{C}_n, \mathcal{C})$ form an operad in groupoids.

A map of operads $TAut_n \to \Fun(\mathcal{C}_n, \mathcal{C})$ is defined as follows.
\begin{align*}
 \sum x_i + L_n^{\geq 2} &\longrightarrow \Fun(\mathcal{C}_n, \mathcal{C}) \\
\mu & \longmapsto (M, \Pi, h) \mapsto (M, \Pi^{\omega^\mu}, \mu \circ h)
\end{align*}
where $(M, \Pi, h)$ is a Hamiltonian $g$-space with $\Pi$ its Poisson structure and $h: M \to \g^n$ its moment map. On arrows it is defined as follows.
\begin{align*}
\TAut_n &\longrightarrow \End(\Fun(\mathcal{C}_n, \mathcal{C})) \\
g : \g^n \to G^n &\longmapsto M \to M \ ; \ m \mapsto g(h(m)).m
\end{align*}
where some abuse of notation is committed and an element $g \in TAut_n$ is viewed as its induced map $\g^n \to G^n$.

\begin{Thm}
The above map is a well-defined map of operads $TAut_n \to \Fun(\mathcal{C}_n, \mathcal{C})$
\end{Thm}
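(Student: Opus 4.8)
The proof is a matter of packaging: I will check that every identity required on a formal Hamiltonian $\g^n$-space $(M,\Pi,h)$ is the image, under the specialization maps $\Omega(\L_n)\to\Omega(\g^n)$ and $\vf(\L_n)\to\vf(\g^n)$ followed by pullback along the (equivariant, Poisson) moment map $h\colon M\to\g^n$, of an identity on $\L_n$ that is already available. Throughout, $\Pi_\text{KKS}=\tfrac{1}{2}\sum\langle x_i,[\partial_i,\partial_i]\rangle$ on $\L_n$, for which $\Omega^1_{\Pi_\text{KKS}}=\tder_n$ and $\Exp(\Omega^1_{\Pi_\text{KKS}})=\TAut_n$; since $\Pi_\text{KKS}$ has no constant part the non-degeneracy hypothesis of Theorem~\ref{thm:main} is vacuous, so Theorem~\ref{thm:main} supplies a $\TAut_n$-equivariant bijection $\mu\mapsto\omega^\mu$ from $\sum x_i+L_n^{\geq2}$ onto a family of closed $2$-forms with $\Pi_\text{KKS}^{\omega^\mu}$ Poisson, $\mu$ its unique moment map, and $\omega^{g.\mu}$ obtained from $\omega^\mu$ by the group cocycle $C$ of the previous section.

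\emph{Objects.} For $(M,\Pi,h)\in\mathcal{C}_n$ let $\Pi^{\omega^\mu}$ be the gauge transform of $\Pi$ by the closed $2$-form $h^*\omega^\mu$; the invertibility making this a genuine bivector field is the $h$-pullback of the condition in Theorem~\ref{thm:main}, so $\Pi^{\omega^\mu}$ is Poisson. Pulling the universal identity $[\Pi_\text{KKS}^{\omega^\mu},\mu]=\rho$ back along $h$ — and using that, by the defining property of the moment map $h$, the map $h$ intertwines $\rho$ with the $\g^n$-action vector field on $M$, and that $h$ is equivariant — shows that $\mu\circ h$ is a moment map for $\Pi^{\omega^\mu}$. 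Thus $\Phi_\mu(M,\Pi,h)=(M,\Pi^{\omega^\mu},\mu\circ h)$ lies in $\mathcal{C}$, and functoriality is immediate: a morphism $f\colon M\to M'$ in $\mathcal{C}_n$ satisfies $h'\circ f=h$, hence transports $h^*\omega^\mu$ to $(h')^*\omega^\mu$ and $\mu\circ h$ to $\mu\circ h'$, so it is a morphism $\Phi_\mu(M)\to\Phi_\mu(M')$.

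\emph{Arrows.} Regard $g\in\TAut_n$, via $\TAut_n\cong\Map(\L_n,\Exp(\L_n))$, as a map $g\colon\g^n\to G^n$, and put $\psi_g\colon M\to M$, $m\mapsto g(h(m)).m$. On $\g^n$ itself (i.e.\ $M=\g^n$, $h=\mathrm{id}$) this map is the formal diffeomorphism $\beta_g$ induced by the automorphism of $L_n$ underlying $g$ — because $\beta_g(\xi)=(\Ad_{g_i(\xi)}\xi_i)_i$ is precisely the action of $g(\xi)\in G^n$ on $\xi$ — and Theorem~\ref{thm:main} says exactly that $\beta_g$ carries $\Pi_\text{KKS}^{\omega^\mu}$ to $\Pi_\text{KKS}^{\omega^{g.\mu}}$ and the moment map $\mu$ to $g.\mu$ (the latter being Drinfeld's Lemma~\ref{drinfeld} when $g\in\Ham(\L_n)$). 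Since $h$ is equivariant, $\psi_g$ lifts $\beta_g$ through $h$, i.e.\ $h\circ\psi_g=\beta_g\circ h$; in particular $\psi_g$ is a diffeomorphism with inverse $\psi_{g^{-1}}$, and $g\mapsto\psi_g$ is multiplicative by a cocycle identity that again follows from equivariance of $h$. Pulling the two universal identities above back along $h$, and using that $\Pi^{h^*\omega^\mu}$ is by construction governed by $h^*$ of the universal data, yields $(\psi_g)_*\Pi^{h^*\omega^\mu}=\Pi^{h^*\omega^{g.\mu}}$ and $(g.\mu)\circ h\circ\psi_g=\mu\circ h$. Hence $\psi_g$ is an isomorphism $\Phi_\mu(M,\Pi,h)\to\Phi_{g.\mu}(M,\Pi,h)$ in $\mathcal{C}$, natural in $(M,\Pi,h)$, and these assemble into a functor from the action groupoid of $\TAut_n$ on $\sum x_i+L_n^{\geq2}$ to $\Fun(\mathcal{C}_n,\mathcal{C})$.

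\emph{Operad structure and main obstacle.} Finally, the operad structure on $\TAut_n$ is induced by substitution of Lie series, and that on $\Fun(\mathcal{C}_n,\mathcal{C})$ by the canonical maps $\mathcal{C}_n\to\mathcal{C}$ together with products of Hamiltonian spaces. Applying $\Phi_{\nu^{(1)}},\dots,\Phi_{\nu^{(n)}}$ to the $\g^{k_i}$-factors of a $\g^{k_1+\dots+k_n}$-space $(M,\Pi,h)$, $h=(h^{(1)},\dots,h^{(n)})$, and then $\Phi_\mu$, produces the composite of the commuting gauge transforms by the closed forms $h^{(i)*}\omega^{\nu^{(i)}}$, with moment map $\mu(\nu^{(1)}\circ h^{(1)},\dots,\nu^{(n)}\circ h^{(n)})=(\mu\circ(\nu^{(i)}))\circ h$; the latter is the moment map of $\Phi_{\mu\circ(\nu^{(i)})}(M,\Pi,h)$, and both resulting Poisson structures lie in one gauge class, so by the uniqueness of moment maps in a gauge class (Theorem~\ref{thm:main}) the two Poisson structures coincide; the same argument handles arrows, proving operadicity. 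I expect the arrow part to be the main obstacle: $\psi_g$ is neither the $\g^n$-action nor a Poisson automorphism of $(M,\Pi)$, and checking that this point-dependent map really implements on $M$ the universal gauge transformation — i.e.\ that it intertwines the \emph{twisted} Poisson structures and the changed moment maps — is where one genuinely needs the precise interplay of the $\TAut_n$-action on $\mathcal{P}$, the cocycle $C$, and Drinfeld's lemma, including the sign and left/right conventions fixing the $\TAut_n$-action on $\sum x_i+L_n^{\geq2}$. Once those universal statements are pinned down, everything else descends formally.
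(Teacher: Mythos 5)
Your overall architecture matches the paper's: check objects, check arrows, infer operadicity, and try to reduce everything to universal identities on $\L_n$ that Theorem \ref{thm:main} supplies. The object part and the operad-composition part are essentially what the paper does (the latter it leaves implicit). But you have correctly diagnosed, and then not closed, the one step that carries the actual mathematical content: the claim $(\psi_g)_*\Pi^{h^*\omega^\mu}=\Pi^{h^*\omega^{g.\mu}}$. This does \emph{not} follow by ``pulling the universal identities back along $h$'', because the Poisson bivector $\Pi$ on $M$ is not pulled back from $\g^n$ --- only the correction terms $h^*\omega^\mu$ are universal. The map $\psi_g(m)=g(h(m)).m$ is a point-dependent twist of the group action, and how it transports the ambient $\Pi$ depends on the interaction between the $G^n$-action, the moment map condition $\rho_i=[h_i,\Pi]$, and $\Pi$ itself; knowing that $\beta_g$ on $\g^n$ carries $\Pi_{\text{KKS}}^{\omega^\mu}$ to $\Pi_{\text{KKS}}^{\omega^{g.\mu}}$ says nothing a priori about $(\psi_g)_*\Pi$ on a general $M$. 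Your closing paragraph concedes exactly this (``is where one genuinely needs the precise interplay \dots Once those universal statements are pinned down''), so the proof as written has a genuine gap at its central point.

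The paper closes this gap as follows. First reduce to $\mu=\sum x_i$ by transitivity, so the statement becomes $g.\Pi-\Pi=\Pi^{\omega^{g.\mu}}-\Pi$. Both sides are group $1$-cocycles in $g\in\TAut_n$ with values in bivector fields on $M$, so it suffices to match the associated Lie algebra cocycles. For $u=(u_1,\dots,u_n)\in\tder_n$ the induced vector field on $M$ is $X_u=\sum\langle u_i\circ h,\rho_i\rangle$, and one computes directly
\begin{align*}
L_{X_u}\Pi=\sum\bigl\langle[u_i\circ h,\Pi],\rho_i\bigr\rangle=\sum\Bigl\langle\tfrac{\partial u_i}{\partial h_j}[h_j,\Pi],\rho_i\Bigr\rangle=\sum\Bigl\langle\tfrac{\partial u_i}{\partial h_j}\rho_j,\rho_i\Bigr\rangle,
\end{align*}
using the chain rule and the moment map condition $[h_j,\Pi]=\rho_j$ (plus $[\Pi,\rho_i]=0$). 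The right-hand side is exactly $h^*$ of the Lie cocycle $u\mapsto du$ contracted with the action vector fields, i.e.\ the derivative of $u\mapsto\Pi^{\omega^{\exp(u).\mu}}$. This short computation is the missing ``pinning down'' your argument defers; without it (or an equivalent), the theorem is not proved.
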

\begin{proof}
The first point to notice is that $\mu \in \sum x_i + L_n^{\geq 2}$ give well-defined functors $\mathcal{C}_n \to \mathcal{C}$. Here a Poisson map $M \to \g^n$ is gauge transformed by a closed two form on $\g^n$ and then composed with a Poisson map $\mu: \g^n \to \g$.
For the second part one needs to check that a $g \in \TAut_n$ indeed intertwines the respective Poisson structures. Note that the above formula defines an action of the group $\TAut_n$ on $M$ by diffeomorphisms. It remains to check that $g$ intertwines
\begin{align*}
(M, \Pi^{\omega^\mu}, \mu \circ h) \overset{g}{\to} (M, \Pi^{\omega^{g.\mu}}, (g.\mu) \circ h).
\end{align*}
The moment map part is obvious. Moreover, the statement can be reduced to the case $\mu = \sum x_i$ by transitivity of the action. Thus the statement becomes
\begin{align*}
g. \Pi - \Pi =  \Pi^{\omega^{g.\mu}} - \Pi.
\end{align*}
Since both sides define group cocycle with values in bivector fields on $M$, it is enough to verify that the corresponding Lie cocycles coincide. Let $(u_1,\cdots, u_n) \in \tder_n$ and let moreover $\rho_i = [h_i, \Pi]$ denote the $i$-th $g$-valued action vector field on $M$, the Lie cocycle of the left hand side computes to
\begin{align*}
L_{\sum \left< u_i \circ h, \rho_i \right>}\Pi &= \sum \left< [u_i \circ h, \Pi], \rho_i \right> \\
&= \sum \left< \tfrac{\partial u_i}{\partial h_j} [h_j, \Pi], \rho_i \right> \\
&= \sum \left< \tfrac{\partial u_i}{\partial h_j} \rho_j, \rho_i \right>,
\end{align*}
which is by definition the same as the right hand side.
\end{proof}

\begin{Rem}
Not that for each $\mu \in \sum x_i + L_n^{\geq 2}$ we get a product of Hamiltonian spaces. The resulting $G$-action, however, is always the diagonal action.
\end{Rem}

\begin{Rem}
One can extend beyond formal Hamiltonian spaces by restricting to suitably convergent elements.
\end{Rem}

\emptycomment{
\begin{Thm}
Let $(M, \Pi_0)$ be a $G \times G$-Hamiltonian space with moment maps $h: M \rightarrow \g \times \g$. For any universal function $\mu: \g \times \g \rightarrow \g$ there exists a unique Poisson bivector on $M$ of the form $\Pi = \Pi_0 + \sigma_M$ for a universal 2-form on $\g \times \g$ such that $(M, \Pi)$ is a $G$-Hamiltonian space with diagonal $G$-action and moment map given by $\mu \circ h$.
Moreover, there is a group of formal diffeomorphisms of M acting transitively on such structures.
\end{Thm}

\begin{Cor}
There exists a map $\sum x_i + L_n^{\geq 2} \rightarrow \Omega(\L_n), \ \mu \mapsto \omega^\mu$ such that $(1+\Pi_0 \omega^\mu)^{-1} \Pi_0$ is a Poisson structure with moment map $\mu$. Moreover, this map is given by the de Rham differential, after suitable identification of $g$-valued 1-forms and 2-forms.
\end{Cor}
}

\subsection{Application to Hamiltonian quasi-Poisson spaces}
Let us briefly recall the relevant definitions from \cite{AM}.
Let $\phi \in \Lambda^3\g$ denote the Cartan three-form of the quadratic Lie algebra $\g$.
\begin{Def}
A pair $(M,\Pi)$ of a $\g$-manifold together with a bivector field $\Pi \in \Gamma(\Lambda^2 TM)$ is called {\it quasi-Poisson} if
$$
[\Pi, \Pi] = \phi_M,
$$
where $\phi_M$ denotes the tri-vector field on $M$ induced by $\phi$ and the $\g$-action on M.
\end{Def}
\begin{Def}
A map $\mu: M \rightarrow G$ is called a moment map if
$$
(1 \otimes \mu^* df) \Pi = (\rho \otimes df)(Z)
$$
where $Z \in \g \otimes \vf(G)$ is the adjoint action of $g$ on $G$, where $g$ and $g^*$ are identified.

A tuple $(M, \Pi, \mu)$ is called a \it{$\g$-Hamiltonian quasi-Poisson space}
\end{Def}

The category of $\g$-Hamiltonian quasi-Poisson spaces admits a monoidal structure given by the following
\begin{Def}[Fusion]
Let $(M, \Pi)$ be a  $\g \times \g$-Hamiltonian quasi-Poisson space, then
$$
\Pi_\text{fus} = \Pi - \psi_M
$$
gives a $\g$-Hamiltonian quasi-Poisson space with the diagonal $\g$-action and moment map defined by multiplying the two factors.
\end{Def}
For two $\g$-Hamiltonian quasi-Poisson spaces $M$ and $N$, we define their fusion product by
$$
M \circledast N := (M \times N, \Pi_M + \Pi_N - \psi_{M \times N}, \mu_1 \cdot \mu_2).
$$
\begin{Exa}
The moduli space of flat $\g$-connections on a surface $\Sigma_{g,n}$ of genus $g$ with $n$ boundary components, and a marked point on the boundary is given by $\Hom(\pi_1, G) \cong G^{2g+n-1}$, and carries a natural quasi-Poisson structure. It can be constructed by viewing it as $DG^{\circledast g} \circledast G^{\circledast n-1}$.
\end{Exa}
A quasi-Poisson bivector can in general be turned into a Poisson bivector by adding an r-matrix term. One particular (dynamical) r-matrix is the Alekseev-Meinrenken dynamical r-matrix. Thus the constructions goes as follows. Let $\nu(z) := \frac{1}{z} - \frac{1}{2}\coth(\frac{z}{2}) = -\tfrac{z}{12} + \tfrac{z^3}{720} + \cdots$ and define the following universal two-form on $\g$,
$$
T = \left< dx, \nu(\ad_x)dx \right>.
$$
Then recall (cf. \cite{AM})
\begin{Prop}[Exponentiation]
Let $(M, \Pi)$ be a $\g$-Poisson manifold with moment map $\mu: M \rightarrow \g$. Consider $T$ as a map $\g \rightarrow \g \wedge \g$. Then
$(M, \Pi - (\mu^*T)_M)$ is a $\g$-Hamiltonian quasi-Poisson manifold with moment map $\exp \circ \mu$.

The bivector field can also be written as $\Pi - (\mu^*T)^\sharp$ by considering $T$ as a two-form on $\g$ and using the morphism $\sharp$ between forms and polyvector fields induced by $\Pi$.
\end{Prop}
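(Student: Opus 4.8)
The plan is to reduce both assertions of the proposition to \emph{universal} identities on a single copy of $\g$ --- identities among the universal forms and polyvector fields introduced above --- which by Lemma~\ref{lem:faith} may be verified on $\sll(N)$ for $N$ large. The only genuinely nontrivial input will be the modified classical dynamical Yang--Baxter equation satisfied by the Alekseev--Meinrenken function $\nu$, which is the computation of \cite{AM}. There are two things to prove: that the two displayed formulae for the correction bivector agree, and that $\Theta := \Pi - (\mu^*T)_M$ is quasi-Poisson with moment map $\exp\circ\mu$. I would dispose of the first as a warm-up.

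For the identification $(\mu^*T)_M = (\mu^*T)^\sharp$: the map $(-)^\sharp = (-)^\Pi$ is multiplicative on decomposable forms, so on a product of two $\g$-valued $1$-forms it sends each factor to its $\Pi$-Hamiltonian vector field; and the moment-map condition for the Hamiltonian $\g$-space $(M,\Pi,\mu)$ says exactly that $(-)^\sharp$ carries $\mu^{*}$ of the tautological $\g$-valued $1$-form $dx$ on $\g$ to the $\g$-action vector field on $M$. The construction $(-)_M$ is built directly from the $\g$-action and sends the same $dx$ to the same vector field. Since $T = \left<dx, \nu(\ad_x)dx\right>$ is assembled from $dx$ purely by the universal operations of applying $\ad_x$ and of pairing $\g$-valued forms, and both constructions intertwine those operations with their counterparts on $M$, the two expressions for the correction term coincide. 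This step is routine.

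For the quasi-Poisson equation, expand
\[
[\Theta,\Theta] = [\Pi,\Pi] - 2\,[\Pi,(\mu^*T)_M] + [(\mu^*T)_M,(\mu^*T)_M].
\]
The first term vanishes since $\Pi$ is Poisson. In the cross term, the Leibniz rule for the Schouten bracket together with the moment-map relation $[\Pi,\mu^*f] = $ (the action vector field attached to the $\g$-valued function $d_\mu f$) rewrites $[\Pi,(\mu^*T)_M]$ as the image, under the $\g$-action on $M$, of $\mu^*$ applied to the de Rham differential of $T$ on $\g$ and to $\iota_\rho$-type contractions of $T$; the third term is likewise the $\g$-action image of $\mu^*$ of a quadratic ``$[T,T]$'' expression. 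Thus the whole right-hand side is the image under the $\g$-action of $\mu^*$ of a \emph{single} universal trivector on $\g$, and $[\Theta,\Theta] = \phi_M$ becomes one universal identity on $\g$, of the schematic shape $\mathrm{Alt}(dT) + [T,T] = \phi$ with explicit normalisations to be tracked. By Lemma~\ref{lem:faith} it is enough to check this on $\sll(N)$, where it is precisely the modified classical dynamical Yang--Baxter equation for the Alekseev--Meinrenken dynamical $r$-matrix with the Cartan $3$-form on the right --- the content of \cite{AM}, and the reason for the particular choice $\nu(z) = \tfrac1z - \tfrac12\coth\tfrac z2$.

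Finally, for the moment map: unwinding the quasi-Poisson moment-map definition, one must verify $(1\otimes(\exp\circ\mu)^*df)\,\Theta = (\rho\otimes df)(Z)$. Substituting $\Theta = \Pi - (\mu^*T)_M$, using once more $[\Pi,\mu^*g] = $ (action vector field) and the left-trivialised differential of $\exp$, namely $\tfrac{1-e^{-\ad_x}}{\ad_x}$, the identity collapses --- after the same reduction to a universal statement on $\g$ --- to the scalar identity relating $\nu(z)$ to $\tfrac1z$ and $\tfrac12\coth\tfrac z2$, i.e.\ to the very definition of $\nu$. The main obstacle is therefore the middle step: verifying that $\nu$ solves the modified dynamical classical Yang--Baxter equation with $\phi$ on the right, a generating-function identity in $\coth$ imported from \cite{AM}. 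Everything else --- the Leibniz rules and factors of $2$ in the Schouten bracket, the wedge-versus-symmetric conventions for bivectors, and the sign of $\phi$ --- is bookkeeping.
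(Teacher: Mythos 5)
Your proposal is correct and follows essentially the route the paper itself indicates: the paper states this Proposition as a recollection from \cite{AM} and immediately records that it is equivalent to the universal dynamical Yang--Baxter identity $-2\,dT + [T,T]_{\Pi_{KKS}} = \tfrac{1}{6}\left<dx,[dx,dx]\right>$ for $T\in\Omega^2(\L_1)$, which is exactly the reduction you carry out (with the $(\mu^*T)_M=(\mu^*T)^\sharp$ identification and the moment-map check as bookkeeping). The only caveat is normalisation: in the universal formulation there is no extra alternation on $dT$ and the coefficients are the ones displayed in the paper, but you flag this explicitly, so nothing is missing.
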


Let $\Exp$ denote the functor sending a $\g$-Hamiltonian Poisson space to the $\g$-Hamiltonian quasi-Poisson space given by the last proposition.
\begin{Exa}
The standard $\g$-Hamiltonian quasi-Poisson space $G$ with moment map the identity, corresponds to $\g$ with its KKS structure under this functor.
\end{Exa}

The theorem is equivalent to $T \in \Omega^2(\L_1)$ satisfying the dynamical Yang-Baxter equation
\[
-2 dT + [T,T]_{\Pi_{KKS}} = \tfrac{1}{6} \left< dx, [dx,dx] \right>.
\]

Pulling back the fusion product along the functor $\Exp$ one gets a second monoidal structure on the category of $\g$-Hamiltonian spaces, which we denote again by $\circledast$. It is given by
\[
(M,\Pi_M, \mu_M) \circledast (N,\Pi_N, \mu_N) = (M \times N, \Pi_M + \Pi_N + ((\mu_M \times \mu_N)^*\sigma)^\sharp, \log(e^{\mu_M} e^{\mu_N}))
\]
for
$$
\sigma = T_{12} - T_1 - T_2 +  \left<dx, dy \right> \in \Omega^2(\L_n).
$$
This $\sigma$ is thus a Maurer-Cartan element in $\Omega^2(\L_n)$ since it is so for any $\sll_k$. Setting $\omega = \sigma (1 + \Pi \sigma)^{-1} \in \Omega^2{\L_n}$, the above Poisson structure can be written as $(\Pi_M \times \Pi_N)^{(\mu_M \times \mu_N)^*\omega)}$ and $\omega = \omega^{\log(e^{x_1} e^{x_2})}$.
In particular, there two products are induced by $x_1 + x_2$ and $\log(e^{x_1} e^{x_2})$, respectively.

Let now $F \in \TAut_2$ be an element intertwining those two structures, that is such that 
\[
\label{eq:special}
 F(\log(e^{x_1} e^{x_2})) = x_1 + x_2
\]
\begin{Rem}
As shown in \cite{AET} one particular source of such $F$ is Drinfeld associators. Namely, let $\Phi = \exp(\phi)$ for $\phi \in \hat{Lie}(x,y)$ be a Drinfeld associator. Then we associate to it the $F_\Phi$ with components $$\left(\Phi(x,-x-y), e^{-\frac{x+y}{2}} \Phi(y,-x-y) \right).$$ 
\emptycomment{Then the following picture shows that $F_\Phi$ intertwines the Baker-Campbell-Hausdorff formula and the sum as in \eqref{eq:special}.
---insert picture----}
\end{Rem}
As a consequence of the above discussion, we get the following.
\begin{Prop}
Let $M,N$ be two $g$-Hamiltonian Poisson spaces. Then the following map is Poisson.
\begin{eqnarray*}
M \times N &\overset{F_{M,N}}{\rightarrow}& \Exp^{-1}\left(\Exp(M) \circledast \Exp(N) \right) \\
(a,b) &\mapsto& (F_1(\mu_M(a), \mu_N(b)). a, F_2(\mu_M(a), \mu_N(b)).b)
\end{eqnarray*}
Moreover, the $F_{M,N}$ are a natural transformation.
\end{Prop}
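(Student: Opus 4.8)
The plan is to deduce this directly from the operad map $\TAut_n\to\Fun(\mathcal{C}_n,\mathcal{C})$ of the previous theorem, specialized to $n=2$, once the two products in sight are recognized as the two composites one can form with it.

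First I would introduce the functor $\boxtimes\colon\mathcal{C}\times\mathcal{C}\to\mathcal{C}_2$ sending $(M,N)$ to $M\times N$ with the $\g\times\g$-action in which each factor acts on its own tensorand, Poisson bivector $\Pi_M+\Pi_N$ (Poisson because the cross terms vanish), and moment map $\mu_M\times\mu_N\colon M\times N\to\g\times\g$; it is visibly functorial. Then I would observe that precomposing $\boxtimes$ with the functor $\mathcal{C}_2\to\mathcal{C}$ that the operad map attaches to $\mu=\sum x_i$ recovers the ordinary product $(M,N)\mapsto(M\times N,\Pi_M+\Pi_N,\mu_M+\mu_N)$, since $\omega^{\sum x_i}=0$ (the KKS structure has no constant part and is its own moment map, so by the injectivity of the moment-map assignment in Theorem~\ref{thm:main} together with non-degeneracy this vanishing is forced); and precomposing $\boxtimes$ with the functor attached to $\mu=\log(e^{x_1}e^{x_2})\in\sum x_i+L_2^{\geq 2}$ recovers $\Exp^{-1}(\Exp(-)\circledast\Exp(-))$, by the displayed formula for $\circledast$ above together with the identification $\omega=\omega^{\log(e^{x_1}e^{x_2})}$ recorded there.

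Next, by Lemma~\ref{transitivity} (or, more concretely, via the associator construction of the preceding remark) there is $F=(F_1,F_2)\in\TAut_2$ with $F(\log(e^{x_1}e^{x_2}))=x_1+x_2$; viewed as an arrow of the groupoid $\TAut_2$ it connects the two moment maps, so the operad map turns it into a natural isomorphism between the two functors $\mathcal{C}_2\to\mathcal{C}$ above, and precomposing with $\boxtimes$ gives a natural isomorphism between the ordinary product and $\Exp^{-1}(\Exp(-)\circledast\Exp(-))$. By the definition of the operad map on arrows its component at $(M,N)$ is $m\mapsto F(h(m)).m$ with $h=\mu_M\times\mu_N$, i.e.\ exactly $(a,b)\mapsto(F_1(\mu_M(a),\mu_N(b)).a,F_2(\mu_M(a),\mu_N(b)).b)=F_{M,N}$; the assertions that $F_{M,N}$ is Poisson and intertwines the moment maps are nothing but the statement that this component is a morphism in $\mathcal{C}$, which is part of the content of the operad-map theorem above, and naturality of $F_{M,N}$ in $(M,N)$ is automatic since a natural transformation precomposed with the functor $\boxtimes$ is again one.

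The only genuine work sits in the second paragraph: verifying that the bivector $(\Pi_M+\Pi_N)^{(\mu_M\times\mu_N)^*\omega}$ delivered by the functor for $\mu=\log(e^{x_1}e^{x_2})$ coincides with $\Pi_M+\Pi_N+((\mu_M\times\mu_N)^*\sigma)^\sharp$ arising from the exponentiation proposition and the fusion formula, where $\sigma=T_{12}-T_1-T_2+\left<dx,dy\right>$. This in turn rests on $\sigma$ being a Maurer--Cartan element of $\Omega^2(\L_2)$ --- equivalently the dynamical Yang--Baxter equation for $T$ stated above --- which makes $\omega=\sigma(1+\Pi\sigma)^{-1}$ closed and, by the uniqueness in Theorem~\ref{thm:main}, equal to $\omega^{\log(e^{x_1}e^{x_2})}$; once this is established the remainder is purely formal. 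A secondary point requiring attention is keeping the direction conventions for the $\TAut_2$-action straight (replacing $F$ by $F^{-1}$ if needed), so that it is the stated map, and not its inverse, that comes out.
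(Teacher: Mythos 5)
Your proposal is correct and follows essentially the same route as the paper, which presents this proposition as a direct consequence of the operad-map theorem $\TAut_n \to \Fun(\mathcal{C}_n,\mathcal{C})$ applied to the arrow $F$ connecting the moment maps $\log(e^{x_1}e^{x_2})$ and $x_1+x_2$, after identifying the fused bivector $\Pi_M+\Pi_N+((\mu_M\times\mu_N)^*\sigma)^\sharp$ with the gauge-transformed structure $(\Pi_M+\Pi_N)^{(\mu_M\times\mu_N)^*\omega}$ via the Maurer--Cartan property of $\sigma$. You correctly isolate that identification as the only substantive verification; the rest is the formal naturality bookkeeping the paper leaves implicit.
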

\emptycomment{
\begin{Prop}
The map $F_{M,N}$ is a Poisson map.
\end{Prop}
\begin{proof}
The Poisson bivector on  $\Exp^{-1}\left(\Exp(M) \circledast \Exp(N) \right)$ is given by
$$
\Pi_M + \Pi_N - (\mu_M^* T)^\sharp - (\mu_N^*T)^\sharp + \psi_{M \times N} +( (\log(e^{\mu_M} e^{\mu_N}))^*T)^\sharp,
$$
which can be written as
$$
\Pi_M + \Pi_N + ((\mu_M \times \mu_N)^*\sigma)^\sharp
$$
for
$$
\sigma = T_{12} - T_1 - T_2 +  \left<dx, dy \right> \in \Omega^2(\L_n)
$$
where $T_i$ and $T_{12}$ are defined via the pullback of $T = \left< dx, \nu(\ad_x)dx \right> \in \Omega^2(\L_1)$ along the maps $x_1$, $x_2$ and $\log(e^{x_1}e^{x_2})$, respectively.
In particular we see that the Poisson structure is of the required form and it is thus enought to respect the moment maps, which $F$ does by assumption.
\end{proof}
}
The maps $F_{M,N}$ can now be interpreted as a monoidal structure on the functor $\Exp$. Let $\mathcal{C}$ denote the category of $g$-Hamiltonian Poisson spaces with monoidal product given by the product of Poisson spaces. Instead of the trivial associator isomorphism, let $\mathcal{C}$ be endowed with the associator derived from $F$. More precisely, define
$$
\Phi^F = F^{}_{1,23} F^{}_{2,3} F_{1,2}^{-1} F_{12,3}^{-1} \in \TAut_3,
$$
and use it to define a diffeomorphism $\Phi^F_{X,Y,Z}$ for any triple $X,Y,Z$ of $\g$-Hamiltonian Poisson spaces. Let $\mathcal{D}$ denote the category of $g$-Hamiltonian quasi-Poisson spaces.
Then we get
\begin{Prop}
An $F \in \TAut_2$ such that $F(\log(e^{x_1} e^{x_2})) = x_1 + x_2$ promotes the functor $\Exp$ to a monoidal equivalence
\begin{eqnarray*}
(\mathcal{C}, \times, \Phi^F) &\overset{\Exp}{\longrightarrow}& (\mathcal{D}, \circledast, \id)
\end{eqnarray*}
\end{Prop}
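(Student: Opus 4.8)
The plan is to split the statement into three tasks: (i) $\Exp$ is an equivalence of categories; (ii) the natural isomorphisms $F_{M,N}$ promote $\Exp$ to a strong monoidal functor between $(\mathcal C,\times,\Phi^F)$ and $(\mathcal D,\circledast,\id)$; and (iii) $(\mathcal C,\times,\Phi^F)$ is a monoidal category, i.e.\ $\Phi^F$ obeys the pentagon. Together with the (trivial) unit constraints — $\Exp$ fixes the one-point space and $\exp 0=e$ — these yield the asserted monoidal equivalence.

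For (i): $\Exp$ does not touch the underlying manifold or the $\g$-action; it only replaces $\Pi$ by $\Pi-(\mu^*T)_M$ and $\mu$ by $\exp\circ\mu$. A $\g$-equivariant map that intertwines the moment maps automatically pulls back $(\mu^*T)_M$, because $T\in\Omega^2(\L_1)$ is a universal form pulled back solely along the moment map; hence being Poisson for the original pair is equivalent to being Poisson for the twisted pair, so $\Exp$ is fully faithful. For essential surjectivity, given a Hamiltonian quasi-Poisson $(M,\Pi',\mu')$ one writes $\mu'=\exp\circ\mu$ (legitimate in the present formal setting) and sets $\Pi:=\Pi'+(\mu^*T)_M$; that $[\Pi,\Pi]=0$ is the Exponentiation Proposition read backwards, equivalently the dynamical Yang--Baxter equation satisfied by $T$ recorded above. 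Thus $\Exp$ has a quasi-inverse given by the analogous formula with $\log$ and $-T$.

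For (ii): the preceding Proposition already provides, naturally in $M$ and $N$, a Poisson diffeomorphism $F_{M,N}\colon M\times N\to\Exp^{-1}(\Exp M\circledast\Exp N)$; applying $\Exp$ and inverting gives a natural isomorphism $J_{M,N}\colon\Exp M\circledast\Exp N\xrightarrow{\ \sim\ }\Exp(M\times N)$. Since the associator on $\mathcal D$ is the identity, the monoidal-functor hexagon for $(\Exp,J)$ amounts to the assertion that $\Exp(\Phi^F_{X,Y,Z})$ equals, up to the evident identifications, the composite $J_{X,Y\times Z}\circ(\id\circledast J_{Y,Z})\circ\bigl(J_{X\times Y,Z}\circ(J_{X,Y}\circledast\id)\bigr)^{-1}$. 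Unwinding the definitions of $\circledast$ and of the $J$'s on underlying manifolds — everything here being the operad action of $\TAut_n$ through the moment maps — turns this into the element-level relation $F_{1,23}F_{2,3}=\Phi^F F_{12,3}F_{1,2}$ in $\TAut_3$, which is precisely the definition $\Phi^F=F_{1,23}F_{2,3}F_{1,2}^{-1}F_{12,3}^{-1}$. So (ii) holds by construction; what must be checked with care is that the operadic composition in $\TAut_n$ agrees with the substitution of moment maps used to build the functors $\mathcal C_n\to\mathcal C$, which is exactly the content of the operad-map theorem above.

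For (iii) I would argue by \emph{transport of structure}. Fusion on $\mathcal D$ is associative and its moment map $\mu_1\cdot\mu_2$ is strictly associative, so $(\mathcal D,\circledast,\id)$ is genuinely monoidal \cite{AM}; pulling this back along the equivalence $\Exp$ gives a monoidal structure on $\mathcal C$ with product $\Exp^{-1}(\Exp(-)\circledast\Exp(-))$ and some associator $a'$, and the $F_{M,N}$ then replace this product by the honest product $\times$, carrying $a'$ to $\Phi^F$ by the standard coboundary recipe. Both transports preserve the pentagon, so $\Phi^F$ satisfies it; alternatively one verifies the pentagon for $\Phi^F$ directly as an identity in $\TAut_4$, reflecting the strict associativity of $x_1+x_2+x_3+x_4$. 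The main obstacle is not conceptual but organisational: making the dictionary between the functorial coherence diagrams and the element-level identities in the operad $\TAut_n$ fully precise, together with the formal-analytic points — well-definedness of $\log$ on moment maps and the pullback behaviour of the universal form $T$ — needed to see that $\Exp^{-1}$ is a bona fide functor.
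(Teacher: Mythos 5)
The paper states this proposition without proof, presenting it as a formal consequence of the preceding results (the operad action of $\TAut_n$ on Hamiltonian spaces, the Poissonness and naturality of $F_{M,N}$, and the definition of $\Phi^F$); your proposal correctly fills in exactly those categorical details, so it follows the paper's intended route rather than a genuinely different one. The points you flag as needing care --- the formal $\log$ on group-valued moment maps for essential surjectivity, and the transport-of-structure argument giving the pentagon for $\Phi^F$ --- are the right ones, and both go through in the formal setting the paper works in.
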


\newcommand{\spq}{\big{/}\!\! \big{/}_{\! \! 0} \, }
\begin{Cor}
An $F \in \eTAut[2]$ satisfying \eqref{eq:special} gives a Poisson map
$$
O_{\lambda_1} \times \cdots \times O_{\lambda_n} \spq G \rightarrow \mathcal{M}(\Sigma_{0,n}, C_1, \cdots, C_n)
$$
where $O_{\lambda_i}$ are coadjoint orbits for given $\lambda_i \in \g \cong \g^*$, and $\mathcal{M}(\Sigma_{0,n}, C_1, \cdots, C_n)$ is the moduli space of flat connections on a surface of genus 0 with $n$ punctures and monodromies around the punctures prescribed by conjugacy classes $C_i = G.exp(\lambda_i)$.
\end{Cor}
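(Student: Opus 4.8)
The plan is to obtain the map by reducing the monoidal equivalence $\Exp\colon(\mathcal{C},\times,\Phi^F)\to(\mathcal{D},\circledast,\id)$ of the preceding Proposition at the group unit, and then to recognise the two ends of the reduction as the spaces in the statement.

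First I would recall, as a restriction of the Example following the Exponentiation Proposition, that for each $\lambda_i$ the coadjoint orbit $O_{\lambda_i}$ — regarded as a $\g$-Hamiltonian Poisson space with its KKS bracket and with moment map the inclusion $O_{\lambda_i}\hookrightarrow\g\cong\g^*$ — satisfies $\Exp(O_{\lambda_i})\cong C_i$, the conjugacy class $C_i=G.\exp(\lambda_i)$ with its standard quasi-Poisson structure and inclusion moment map $C_i\hookrightarrow G$; the identification is $\exp$ itself, which is a diffeomorphism onto $C_i$ for $\lambda_i$ near $0$. Then I would iterate the natural transformation $F_{M,N}$ of the preceding Proposition along a fixed bracketing to produce a Poisson isomorphism
\[
O_{\lambda_1}\times\cdots\times O_{\lambda_n}\;\xrightarrow{\ \sim\ }\;\Exp^{-1}\!\bigl(C_1\circledast\cdots\circledast C_n\bigr),
\]
using on the left-hand side the product Poisson bracket, the diagonal $G$-action, and moment map $\mu_1+\cdots+\mu_n$, while the right-hand side is $C_1\circledast\cdots\circledast C_n$ equipped (via $\Exp^{-1}$) with a Poisson bracket $\sum_i\Pi_i+(\cdots)^\sharp$ and moment map $\log\bigl(e^{\mu_1}\cdots e^{\mu_n}\bigr)\colon\prod_iO_{\lambda_i}\to\g$. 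Note that $\Exp(O_{\lambda_1}\circledast\cdots\circledast O_{\lambda_n})=C_1\circledast\cdots\circledast C_n$ by definition of the pulled-back product $\circledast$ on $\mathcal{C}$, so the iterated $F_{M,N}$ really does land here. The key input is the defining relation $F(\log(e^{x_1}e^{x_2}))=x_1+x_2$: applied inductively it is exactly what makes the composite intertwine $\sum_i\mu_i$ with $\log(e^{\mu_1}\cdots e^{\mu_n})$, and (being a relation in $\eTAut[2]$, stable under the $\Phi^F$-associator, which is itself moment-map compatible) it makes the result independent of the bracketing.

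Next I would take Poisson reduction at $0\in\g$ on both sides. On the left this gives $O_{\lambda_1}\times\cdots\times O_{\lambda_n}\spq G$. On the right, $\log(e^{\mu_1}\cdots e^{\mu_n})=0$ precisely where $e^{\mu_1}\cdots e^{\mu_n}=1$, and along this common zero fibre the data distinguishing $\Exp^{-1}(C_1\circledast\cdots\circledast C_n)$ from $C_1\circledast\cdots\circledast C_n$ — the correction term $(\mu^{*}T)^\sharp$ in the bivector and the exponentiated moment map — become trivial, since there $\mu\equiv0$, so $\mu^{*}T$ pulls back to $0$ and $\exp\circ\mu\equiv1$. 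Hence the Poisson reduction of $\Exp^{-1}(C_1\circledast\cdots\circledast C_n)$ at $0$ agrees with the quasi-Hamiltonian reduction of $C_1\circledast\cdots\circledast C_n$ at $1\in G$ (the standard compatibility of $\Exp$ with reduction). Finally, that quasi-Hamiltonian reduction, $\{(g_i)\in C_1\times\cdots\times C_n\mid g_1\cdots g_n=1\}/G$, is by the fusion presentation of moduli of flat connections recalled in the Example in the subsection on Hamiltonian quasi-Poisson spaces exactly $\mathcal{M}(\Sigma_{0,n},C_1,\ldots,C_n)$. Composing the displayed isomorphism with these identifications yields the asserted Poisson map — in fact an isomorphism onto its image wherever $\exp$ identifies the $O_{\lambda_i}$ with the $C_i$.

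The main obstacle I expect lies in the second paragraph: one must check carefully that the inductive bookkeeping genuinely converts the additive moment map $\sum_i\mu_i$ on the orbit side into the product-of-exponentials moment map $\log(e^{\mu_1}\cdots e^{\mu_n})$ on the quasi-Poisson side, so that the two reduction loci match up under $F$, and one must be sure that the reductions are taken cleanly (or that one works formally near the identity, which is the intended setting). The remaining ingredients — the monoidal equivalence, the $\Exp$/reduction compatibility, and the fusion presentation of $\mathcal{M}(\Sigma_{0,n},C_1,\ldots,C_n)$ — are either established above or standard and citable.
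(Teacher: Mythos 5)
Your argument is correct and is essentially the paper's intended route: the corollary is stated without proof precisely as the reduction-at-the-unit consequence of the monoidal equivalence $\Exp\colon(\mathcal{C},\times,\Phi^F)\to(\mathcal{D},\circledast,\id)$, using $\Exp(O_{\lambda_i})\cong C_i$, the iterated $F_{M,N}$, and the fusion presentation of $\mathcal{M}(\Sigma_{0,n},C_1,\dots,C_n)$. Your added care about the matching of reduction loci and the vanishing of the correction term $(\mu^*T)^\sharp$ on the zero fibre (since $\nu(0)=0$) fills in details the paper leaves implicit.
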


\begin{Rem}
Taking $F = F_{\Phi_{KZ}}$ to be associated to the Knizhnik-Zamolodchikov associator, the previous map is given by
$$
a_1, a_2 \mapsto d-\left(\frac{a_1}{z} + \frac{a_2}{z-1} \right) dz
$$
\end{Rem}

\emptycomment{
\section{Proof}
Let $(M, \Pi_0)$ be a $\g \times \g$-Hamiltonian Poisson space with moment map $h$ and action $\rho = (\rho_1, \rho_2): \g \times \g \rightarrow \vf(M)$.
The Lie algebra $\Omega^1(\L_n)$ acts on $M$ via
\begin{eqnarray*}
\Omega^1(\L_n) &\longrightarrow& \vf(M) \\
\alpha = \left< \alpha_i, dx_i \right> &\longmapsto& X_\alpha =  \left< \alpha_i \circ  h, \rho_i \right>.
\end{eqnarray*}
By integrating this action we get an action of $\eTAut$ on $M$ by diffeomorphism. Its effect on $\Pi$ is given by the following
\begin{Lem}
\label{existence}
Let $e^{\alpha} \in \eTAut$ then
$$
(e^{\alpha})_* \Pi_0 = \Pi_0 + (h^*\sigma)^\sharp = (1+ \Pi_0 \omega)^{-1} \Pi_0 := \Pi_0^\omega,
$$
where
\begin{eqnarray*}
\sigma = \left( \frac{e^{\ad_\alpha} -1}{\ad_\alpha} \right) (d \alpha) \in \Omega^2(\L_n) \\
\omega = d \left( \frac{e^{L_{\alpha^\sharp}} -1}{L_{\alpha^\sharp}}  \alpha \right) \in \Omega^2(\L_n),
\end{eqnarray*}
related by
$$
\omega = - \sigma (1 + \Pi \sigma)^{-1}
$$
\end{Lem}
\begin{proof}

\end{proof}

Let $\mathcal{M}$ be the set of pairs $(\Pi, \mu)$ such that $\Pi$ a Poisson bracket on M such that $\mu \circ h$ is a moment map for the diagonal $\g$-action. Then there is an action of $\eTAut$ on $\mathcal{M}$ and we get
\begin{Lem}
The orbit of $\Pi_0$ under the $\eTAut$ action is
$$\{\ \Pi_0 + (h^*\sigma)^\sharp \  | \ \sigma \in \Omega^1(\L_n) \text{ such that }  \Pi_0 + \sigma^\sharp \text{ defines a Poisson structure on } \L_n \ \}.$$
\end{Lem} 
\begin{proof}
We have already seen that the orbit lies in the given set. On the other hand, let $\sigma \in \Omega^1(\L_n)$ be such that $\Pi + \sigma^\sharp$ defines a Poisson structure. By setting $\omega = - \sigma (1 + \Pi \sigma)^{-1}$ we see that this is equivalent to $d \omega^\sharp = 0$. By lemma \ref{lem1} we get that $\omega$ is closed and hence exact $\omega = d \lambda$ by lemma \ref{lem2}. By the usual Moser type argument on gets the necessary diffeomorphism by integrating the time-dependent vectorfield $X = \Pi (1 + t \Pi \omega)^{-1} \lambda$. By writing $X = ((1 + t \Pi d\omega)^{-1} \lambda)^\sharp$ one recognizes the resulting diffeomorphism to origin from $\eTAut$.
\end{proof}

\begin{proof}[Proof of Theorem \ref{thetheorem}]
By lemma \ref{transitivity} there exists a $\Psi \in \eTAut$ such that $\Psi . \mu_0 = \mu$ so existence part follows from \ref{existence}. For uniqueness, by the previous lemma one can find a $\Phi \in \eTAut$ such that $\Phi . \Pi = \Pi_\text{KKS}$ and thus $\Phi . \mu$ and $\mu_0 = \Sigma x_i$ are both moment maps for the same Poisson structure and group action. Thus $[\Pi_\text{KKS},\Phi \mu - \mu_0] = 0$ and hence $d(\Phi . \mu - \mu_0)^\sharp = 0$ implying that $\Phi . \mu = \mu_0$. Now we have that $\Phi^{-1} \Psi . \mu_0 = \mu_0$ and thus by Lemma \ref{drinfeld} $\Phi^{-1} \Psi \in \Ham$ and hence respects the Poisson structure and thus
$$
\Pi = \Phi^{-1} \Pi_\text{KKS} =\Psi . \Pi,
$$
which shows uniqueness.
\end{proof}
}


\begin{thebibliography}{9}
\bibitem{Jef} L. Jeffrey, Extended moduli spaces of flat connections on Riemann surfaces, Math. Ann. 298 (1994), no. 4, 667--692
\bibitem{vdb} M. van den Bergh, Double Poisson algebras, Trans. Amer. Math. Soc. 360 (2008), no. 11, 5711--5769
\bibitem{AM} A. Alekseev, Y. Kosmann-Schwarzbach, E. Meinrenken, Quasi-Poisson manifolds, Canad. J. Math. 54 (2002), no. 1, 3--29
\bibitem{AET} A. Alekseev, B. Enriques, C. Torossian, Drinfeld associators, braid groups and explicit solutions of the
   Kashiwara-Vergne equations, Publ. Math. Inst. Hautes \'Etudes Sci. 112 (2010), 143--189
\bibitem{K} M. Kontsevich, Formal (non)commutative symplectic geometry, The Gel'fand Mathematical Seminars, 1990--1992, Birkh\"auser Boston, Boston, MA (1993), 173--187
\bibitem{MT} G. Masuyeau, V. Turaev, Fox pairings and generalized Dehn twists, Ann. Inst. Fourier (Grenoble) 63 (2013), no. 6, 2403--2456
\bibitem{KK} N. Kawazumi, Y. Kuno, The logarithms of Dehn twists, Quantum Topol. 5 (2014), no. 3, 347-423
\bibitem{AT} A. Alekseev, C. Torossian, The Kashiwara-Vergne conjecture and Drinfeld's associators, Ann. of Math. 175 (2012), no. 2, 415--463
\bibitem{ANXZ} A. Alekseev, F. Naef, X. Xu, C. Zhu, Chern-Simons, Wess-Zumino-Witten and other cocycles, in preparation

\bibitem{Dr} V. Drinfeľd, On quasitriangular quasi-Hopf algebras and on a group that is closely connected with $\operatorname{Gal}(\bar{\mathbb Q}/\mathbb Q)$,  Algebra i Analiz 2 (1990), no. 4, 149--181.

\end{thebibliography}
\end{document}